\documentclass[envcountsame, envcountsect]{svmult}

\usepackage{enumitem}
\usepackage{bbm}
\usepackage{nicefrac}

\usepackage{type1cm}        
%
\usepackage{makeidx}         
\usepackage{graphicx}        
\usepackage{multicol}        
\usepackage[bottom]{footmisc}

\usepackage{newtxtext}       %
\usepackage[varvw]{newtxmath}       


\newcommand{\C}{{\mathbb C}}


\newcommand{\h}{\mathbf{h}}

\begin{document}

\title*{Spectral transition model with the general contact interaction}
\author{Pavel Exner and Ji\v{r}\'{\i} Lipovsk\'{y}}

\institute{Pavel Exner \at {Doppler Institute for Mathematical Physics and Applied Mathematics, Czech Technical University,
B\v rehov{\'a} 7, 11519 Prague, Czechia} {\rm and} {Department of Theoretical Physics, Nuclear Physics Institute, Czech Academy of Sciences, 25068 \v{R}e\v{z} near Prague, Czechia}, \email{exner@ujf.cas.cz}
\and Ji\v{r}\'{\i} Lipovsk\'{y} \at Department of Physics, Faculty of Science, University of Hradec Kr\'alov\'e, Rokitansk\'eho 62,
500\,03 Hradec Kr\'alov\'e, Czechia, \email{jiri.lipovsky@uhk.cz}}

\motto{
 Dedicated to the memory of Sergei Naboko
}

\numberwithin{equation}{section}
\renewcommand{\theequation}{\thesection.\arabic{equation}}

\maketitle

\abstract{
Using a technique introduced by Sergei Naboko, we analyze a generalization of the parameter-controlled model of spectral transition, originally proposed by Smilansky and Solomyak, to the situation where the singular interaction responsible for the effect is characterized by the `full' family of four real numbers with the `diagonal' part of the coupling being position-dependent.\\
\\
\textbf{Keywords} Smilansky-Solomyak model $\cdot$ irreversible quantum graph $\cdot$ four-parameter interaction\\
\\
\textbf{Mathematics Subject Classification}\; 81Q35 $\cdot$ 47A10 $\cdot$ 81Q10
}

\section{Introduction}

Many memories appear when thinking of Sergei Naboko, his charming personality and sharp mathematical mind. Here we choose as a starting point one of his papers \cite{NS06} written with Michael Solomyak, in which he used his deep knowledge of Jacobi operators to enrich our understanding of a model of spectral transition originally proposed by Smilansky and Solomyak \cite{Smi04, SmS05}. This model has different physical interpretations, either as a model of an irreversible behavior on a graph coupled to a caricature heat bath \cite{Smi04} or as the two-dimensional Schr\"odinger operator with a singular interaction of a position-dependent strength \cite{Sol04a, Sol04b}. It contains a parameter that controls its spectrum; it has a critical value at which the spectral nature abruptly changes from a below bounded and partly discrete to an absolutely continuous one covering the whole real axis.

The original model was modified in various ways. For instance, the harmonic confining potential in \eqref{Hamiltonian} below can be replaced by a more general function \cite{Sol06a} and the line on which the coupling is imposed can be replaced by a more general graph \cite{Sol06b}. Another modification consists of replacing the singular interaction by a regular potential channel \cite{BE14, BE17}; the advantage is that one is able in this setting to compare the quantum dynamics with its classical counterpart which exhibits an interesting irregular scattering behavior \cite{Gu18}. We note also that even the basic model still poses open questions, for example, having in addition to the discrete spectrum in the subcritical case also an infinite family of resonances the behavior of which is not fully understood \cite{ELT17}.

A generalization going, so to say, in the opposite direction to \cite{BE14, BE17} consists of replacing the $\delta$ interaction of the original model by a more singular coupling. In \cite{EL18} we did that with the interaction commonly known as $\delta'$ \cite{AGHH}. The aim of the present paper is to extend the conclusions to the situation where such a singular interaction is of the most general type depending on four real parameters; the mentioned $\delta$ and $\delta'$ coupling are now included as particular cases. We are going to show that the effect of abrupt spectral transition is robust, however, it occurs now on a hypersurface in the parameter space. The key element in our analysis of the spectral transition is the link to spectral properties of a specific Jacobi operator introduced in \cite{NS06}.

The paper is structured as follows. In the next section, we introduce our 4-parameter model. Section~\ref{sec:quadratic} is devoted to the study of the quadratic form associated with the Hamiltonian; we prove a lower bound for it. In Section~\ref{sec:jacobi}, we obtain the recurrence system which defines the Jacobi operator associated to the problem. Section~\ref{sec:sa} includes the proof of self-adjointness of the Hamiltonian. In Section~\ref{sec:ac}, the absolutely continuous spectrum of the Hamiltonian is obtained, using the known properties of another Jacobi operator, which differs from the Jacobi operator associated with the system by a compact operator. Theorem~\ref{thm:acj0} similarly to the previous results on $\delta$ and $\delta'$-coupling shows the abrupt change of the spectra of the Jacobi operator from purely absolutely continuous to discrete depending on a real parameter depending on the coupling parameters. Using this result, Theorem~\ref{thm:ach} studies the absolutely continuous spectrum of the Hamiltonian. Finally, in Section 7 we obtain results on the discrete spectrum, in particular, we compare the number of eigenvalues of the Hamiltonian and the Jacobi operator (see Theorem~\ref{thm:discrete2}) and find the asymptotic formul\ae\ for the number of eigenvalues (Theorem~\ref{thm:discrete3}). The Appendix~\ref{sec:appa} includes some technical results needed in Section~\ref{sec:sa}.

\section{The model}\label{sec:model}

The model we are going to discuss describes a quantum system the Hamiltonian of which is the operator of the form
\begin{subequations}
\label{model}
\begin{equation}
  \mathbf{H}_{\alpha,\beta,\gamma} \Psi(x,y) = -\frac{\partial^2 \Psi}{\partial x^2}(x,y) +\frac{1}{2} \left(-\frac{\partial^2 \Psi}{\partial y^2}(x,y)+ y^2\Psi(x,y)\right) \label{Hamiltonian}
\end{equation}
with the general contact interaction with position-dependent coefficients supported by the axis $x=0$, characterized by choosing the operator domain as the family of functions in the Sobolev space $\Psi \in H^2((0,\infty)\times \mathbb{R})\oplus H^2((-\infty,0)\times \mathbb{R})$ satisfying the boundary conditions
\begin{align}
\hspace{-1em}
\frac{\partial \Psi}{\partial x}(0+,y)-\frac{\partial \Psi}{\partial x}(0-,y) & =  \frac{\alpha}{2}y \big(\Psi(0+,y)+\Psi(0-,y)\big)+\frac{\gamma}{2}\Big(\frac{\partial \Psi}{\partial x}(0+,y)+\frac{\partial \Psi}{\partial x}(0-,y)\Big),\label{eq:coupl1} \\[.3em]
\Psi(0+,y)-\Psi(0-,y) & = -\frac{\bar\gamma}{2} \big(\Psi(0+,y)+\Psi(0-,y)\big)+\frac{\beta}{2y}\Big(\frac{\partial \Psi}{\partial x}(0+,y)+\frac{\partial \Psi}{\partial x}(0-,y)\Big)\label{eq:coupl2}
\end{align}
\end{subequations}
with the parameters $\alpha, \beta \in\mathbb{R}$ and $\gamma\in\mathbb{C}$. We note that the conditions defining the general contact interaction are written in different ways \cite[Appendix~K.1]{AGHH}; here we choose the form proposed in \cite{EG99} which has the advantage that one easily singles out the particular cases of $\delta$- and $\delta'$-interactions: the choice $\beta=\gamma =0$ leads to the original Smilansky-Solomyak model with the $\delta$-interaction on the $x$ axis \cite{Smi04, Sol04a, Sol04b, SmS05} and $\alpha=\gamma=0$ yields its $\delta'$-modification discussed in \cite{EL18}. One of the features of those models was that the spectrum was independent of the coupling constant sign. In the general case the mirror transformation, $y\to -y$ can be compensated by a \emph{simultaneous} change of the `diagonal' parameters, $\alpha\to -\alpha$ and $\beta\to -\beta$ which thus leaves the spectrum invariant.

\section{The quadratic form}\label{sec:quadratic}

A convenient way to deal with the operator \eqref{model} is to use the quadratic form method.

\begin{theorem}
The operator $\mathbf{H}_{\alpha,\beta,\gamma}$ is associated with the quadratic form given by
\begin{align*}
  \mathbf{a}_{\alpha,\beta,\gamma}[\Psi] &= \mathbf{a}_0[\Psi]+ \frac{1}{\beta} \big(\mathbf{b}_1[\Psi] + \mathbf{b}_2[\Psi]+ \mathbf{b}_3[\Psi]\big)\quad \mathrm{for}\quad \beta \ne 0, \\[.3em]
  \mathbf{a}_{\alpha,0,\gamma}[\Psi] &= \mathbf{a}_0[\Psi]+ \mathbf{b}_4[\Psi]\quad \mathrm{for}\quad \beta = 0,
\end{align*}
where
\begin{align*}
  \mathbf{a}_0[\Psi] &= \int_{\mathbb{R}^2}\Big(\left|\frac{\partial \Phi}{\partial x}\right|^2+\frac{1}{2}\left|\frac{\partial \Phi}{\partial y}\right|^2+\frac{1}{2}y^2|\Psi|^2\Big)\,\mathrm{d}x\mathrm{d}y, \\[.15em]
  \mathbf{b}_1[\Psi] &= \int_\mathbb{R} y \left|\Psi(0+,y)-\Psi(0-,y)\right|^2\,\mathrm{d}y, \\[.15em]
  \mathbf{b}_2[\Psi] &= \int_\mathbb{R} \frac{\alpha \beta+|\gamma|^2}{4}\,y\left|\Psi(0+,y)+\Psi(0-,y)\right|^2\,\mathrm{d}y, \\[.15em]
  \mathbf{b}_3[\Psi] &= \int_\mathbb{R} y \,\mathrm{Re}\big[\gamma(\bar \Psi(0+,y)+\bar \Psi(0-,y))(\Psi(0+,y)-\Psi(0-,y))\big]\,\mathrm{d}y, \\[.15em]
  \mathbf{b}_4[\Psi] &= \int_\mathbb{R} \frac{\alpha}{4}\,y\left|\Psi(0+,y)+\Psi(0-,y)\right|^2\,\mathrm{d}y
\end{align*}
and the domain $D = \mathrm{dom\,}\mathbf{a}_0$ of the form $\mathbf{a}_0$ is
$$
  D = \left\{\Psi\in H^1((0,\infty)\times\mathbb{R})\oplus H^1((-\infty, 0)\times\mathbb{R}):\: \mathbf{a}_0[\Psi]<\infty\right\}.
$$
\end{theorem}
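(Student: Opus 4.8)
The plan is to verify that the quadratic form $\mathbf{a}_{\alpha,\beta,\gamma}$ is densely defined, closed, and semibounded, and then to identify the self-adjoint operator it generates with $\mathbf{H}_{\alpha,\beta,\gamma}$ by checking that the latter acts as expected on the form domain. The natural route is the standard first Green's identity argument: for $\Psi$ in the operator domain and $\Phi \in D$, integrate $\mathbf{a}_{\alpha,\beta,\gamma}[\Phi,\Psi]$ (the sesquilinear form obtained by polarization) by parts on each half-plane $\pm x>0$ separately, producing the bulk term $\int_{\mathbb{R}^2}\bar\Phi\,\mathbf{H}_{\alpha,\beta,\gamma}\Psi$ together with boundary terms at $x=0\pm$. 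One then has to show that the coupling conditions \eqref{eq:coupl1}--\eqref{eq:coupl2} are exactly what makes the collection of boundary terms coming from $\mathbf{a}_0$ cancel against those coming from $\frac1\beta(\mathbf{b}_1+\mathbf{b}_2+\mathbf{b}_3)$ (resp.\ from $\mathbf{b}_4$ when $\beta=0$).

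Concretely, I would introduce the shorthand $\Psi_\pm(y) := \Psi(0\pm,y)$ and $\Psi'_\pm(y):=\frac{\partial\Psi}{\partial x}(0\pm,y)$, together with the symmetric/antisymmetric combinations $\Psi_s := \frac12(\Psi_+ + \Psi_-)$, $\Psi_a := \Psi_+ - \Psi_-$ and likewise for the derivatives. The boundary term from the $x$-derivative part of $\mathbf{a}_0$ is $\int_\mathbb{R}(\bar\Phi_-\Psi'_- - \bar\Phi_+\Psi'_+)\,\mathrm{d}y$, which rearranges into a bilinear expression in the jump and average of $\Phi$ against the jump and average of $\Psi'$. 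Meanwhile the $\mathbf{b}_j$ terms, after polarization, contribute boundary expressions bilinear in $(\Phi_a,\Phi_s)$ and $(\Psi_a,\Psi_s)$ with the coefficients $y$, $\frac{\alpha\beta+|\gamma|^2}{4}y$, and the $\gamma$-weighted cross term. Imposing that the total boundary contribution vanishes for all admissible $\Phi$ forces two linear relations between $\Psi_a,\Psi_s,\Psi'_a,\Psi'_s$; a direct computation should show these are precisely \eqref{eq:coupl1}--\eqref{eq:coupl2} after multiplying through by $\beta/2$ (and for the case $\beta=0$ one gets the single relation with $\beta=0$ substituted, the jump in $\Psi$ being zero). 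One also needs the reverse inclusion: given $\Psi\in D$ on which the map $\Phi\mapsto\mathbf{a}_{\alpha,\beta,\gamma}[\Phi,\Psi]$ is bounded in the $L^2$ norm, local elliptic regularity away from $x=0$ yields $\Psi\in H^2$ on each half-plane, and then running the Green identity backwards recovers the boundary conditions, so the form operator coincides with $\mathbf{H}_{\alpha,\beta,\gamma}$.

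Two points need care. First, the factors $y$ and $1/y$ appearing in the coupling conditions and in the form are singular or vanishing at $y=0$, so the trace integrals $\int_\mathbb{R} y|\Psi_\pm(0\pm,y)|^2\,\mathrm{d}y$ must be interpreted correctly and shown to be finite on $D$; this is where one invokes that elements of $D$ have $\mathbf{a}_0[\Psi]<\infty$, giving via a trace/interpolation inequality (of the type $\int_\mathbb{R}|y|\,|\Psi(0,y)|^2\,\mathrm{d}y \le C\,\mathbf{a}_0[\Psi]$, using the $y^2|\Psi|^2$ and $|\partial_y\Psi|^2$ control, i.e.\ a one-dimensional Sobolev trace estimate in $x$ combined with the harmonic-oscillator form in $y$) the needed finiteness and relative form-boundedness. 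Second, and this is the main obstacle, one must establish that $\mathbf{a}_{\alpha,\beta,\gamma}$ is closed (or closable with the stated closure) — equivalently that the perturbation $\frac1\beta(\mathbf{b}_1+\mathbf{b}_2+\mathbf{b}_3)$ is form-bounded relative to $\mathbf{a}_0$ with controllable constants. The sign of $\beta$ and the indefinite weight $y$ mean one cannot expect the relative bound to be $<1$ in general, so the honest statement is that the form is closed on $D$ by a direct Cauchy-sequence argument exploiting that the trace maps $\Psi\mapsto\Psi_\pm$ are continuous from $D$ into a weighted $L^2(\mathbb{R};|y|\,\mathrm{d}y)$-type space, together with the closedness of $\mathbf{a}_0$; semiboundedness from below is then the content of the next section and may be quoted there. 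I would carry out the Green-identity computation first (it pins down why the conditions have exactly this form), then the trace estimates, then closedness.
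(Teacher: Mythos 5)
Your proposal is correct and follows essentially the same route as the paper: the heart of the paper's proof is exactly your Green's-identity computation on the two half-planes, rewriting the boundary term $\int_{\mathbb{R}}(\bar\Psi(0+,y)\partial_x\Psi(0+,y)-\bar\Psi(0-,y)\partial_x\Psi(0-,y))\,\mathrm{d}y$ in terms of the jump/average combinations and substituting the coupling conditions \eqref{eq:coupl1}--\eqref{eq:coupl2} to produce $\frac1\beta(\mathbf{b}_1+\mathbf{b}_2+\mathbf{b}_3)$, resp.\ $\mathbf{b}_4$ when $\beta=0$. The additional points you raise (weighted trace estimates, relative form bounds, closedness and semiboundedness) are not part of this theorem's proof in the paper but are handled afterwards, via Theorem~\ref{thm:bounds} and the remark following it, just as you anticipate.
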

\begin{proof}
The quadratic form is given by the integral over $\mathbb{R}^2$ of the expression $\bar\Psi(x,y) (H\Psi)(x,y)$ where $H$ is the symbol given by the right-hand side of \eqref{Hamiltonian}. By integration by parts in both variables $x,y$ and using the fact that $\bar\Psi(x,y)\frac{\partial \Psi}{\partial x}(x,y)$ vanishes as $x,y \to \pm \infty$ we obtain
$$
  \mathbf{a}_{\alpha,\beta,\gamma}[\Psi] = \mathbf{a}_0[\Psi]+\int_\mathbb{R} \Big(\bar\Psi(0+,y)\frac{\partial \Psi}{\partial x}(0+,y)-\bar\Psi(0-,y)\frac{\partial \Psi}{\partial x}(0-,y)\Big)\,\mathrm{d}y.
$$
Introducing the shorthands
\begin{align*}
  f_+ := \Psi(0+,y)+\Psi(0-,y)\,,&\quad f_- := \Psi(0+,y)-\Psi(0-,y), \\[.3em]
  f_+' := \Psi'(0+,y)+\Psi'(0-,y)\,,&\quad f_-' := \Psi'(0+,y)-\Psi'(0-,y)
\end{align*}
one can rewrite the last integral as
$$
  \frac{1}{2}\int_\mathbb{R}(\bar f_+ f_-'+\bar f_- f_+')\,\mathrm{d}y.
$$

We start with the case $\beta \ne 0$. Rewrite the matching conditions \eqref{eq:coupl1} and \eqref{eq:coupl2} as
\begin{align*}
 f_+' &= \frac{2y}{\beta} (f_- + \frac{\bar\gamma}{2}f_+), \\[.15em]
 f_-' &= \frac{\gamma y}{\beta} f_-+\frac{y}{2}(\alpha+\frac{|\gamma|^2}{\beta})f_+\,.
\end{align*}
and substituting it into the above equation we arrive at
\begin{align*}
  \mathbf{a}_{\alpha,\beta,\gamma}[\Psi] - \mathbf{a}_0[\Psi] &= \int_\mathbb{R}\frac{y}{4\beta}\big[(\alpha\beta+|\gamma|^2)|f_+|^2+4|f_-|^2+4\mathrm{Re\,}(\gamma \bar f_+ f_-)\big]\,\mathrm{d}y \\[.2em]
  &= \frac{1}{\beta}(\mathbf{b}_1[\Psi]+\mathbf{b}_2[\Psi]+\mathbf{b}_3[\Psi]).
\end{align*}

For $\beta = 0$ the matching conditions \eqref{eq:coupl1} and \eqref{eq:coupl2} yield
$$
 f_- = -\frac{\bar \gamma}{2}f_+\,,\quad f_-'= \frac{\alpha y}{2} f_+ +\frac{\gamma}{2}f_+',
$$
so that
$$
  \frac{1}{2}(\bar f_+ f_-'+\bar f_- f_+') = \frac{1}{2}\left(\frac{\alpha y}{2}|f_+|^2+\frac{\gamma}{2}\bar f_+ f_+'+\bar f_-f_+'\right) = \frac{\alpha y}{4} |f_+|^2,
$$
and as a result, we get
$
  \mathbf{a}_{\alpha,0,\gamma}[\Psi] - \mathbf{a}_0[\Psi] = \mathbf{b}_4[\Psi]
$
which completes the proof.
\end{proof}

Next we state a simple lemma; if a proof is needed it can be found, for instance in \cite{EL18} as Lemma~5.

\begin{lemma}\label{lem:tech1}
For all $c, d\in \mathbb{C}$ we have
$
  2|\mathrm{Re\,}(\bar c d)|\leq |c|^2+|d|^2\,.
$
\end{lemma}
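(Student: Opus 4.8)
The plan is to derive the inequality from the nonnegativity of a square. For any $c,d\in\mathbb{C}$ one has $|c\pm d|^2\geq 0$, and expanding the modulus square gives
\begin{equation*}
  0\leq |c\pm d|^2 = |c|^2 \pm 2\,\mathrm{Re\,}(\bar c d) + |d|^2,
\end{equation*}
so that $\pm 2\,\mathrm{Re\,}(\bar c d)\leq |c|^2+|d|^2$ for both choices of sign; taking whichever sign makes the left-hand side equal to $2|\mathrm{Re\,}(\bar c d)|$ yields the claim.

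An alternative route, equally short, is to combine the elementary bound $|\mathrm{Re\,}(\bar c d)|\leq |\bar c d| = |c|\,|d|$ with the arithmetic–geometric mean inequality $2|c|\,|d|\leq |c|^2+|d|^2$, the latter again being just $(|c|-|d|)^2\geq 0$.

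There is no real obstacle here: the statement is a one-line consequence of the positivity of a square, which is why in the excerpt it is only recalled rather than proved in detail (the reference to \cite{EL18} is given for completeness). I would present the first argument, since it produces the constant directly without passing through $|c|\,|d|$.
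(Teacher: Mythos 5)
Your proof is correct: expanding $|c\pm d|^2\geq 0$ and choosing the sign that realizes $2|\mathrm{Re\,}(\bar c d)|$ is exactly the standard argument, and the paper itself omits the proof, simply referring to Lemma~5 of \cite{EL18}, where the same elementary computation appears. Nothing further is needed.
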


The second simple lemma generalizes Lemma~\ref{lem:tech1}:
\begin{lemma}\label{lem:tech3}
Let $\sigma_j$, $j=1,2,3$ be the Pauli matrices and $\sigma_0$ the $2\times 2$ identity matrix. Let further $\mathbf{u}, \mathbf{v}$ be complex two-component column vectors. Then for any $\omega_j\in \mathbb{R}$ we have
$$
  \mathrm{Re\,}\bigg[\bar{\mathbf{u}}^T \cdot \bigg(\sum_{j=0}^3 \omega_j \sigma_j\bigg)\cdot \mathbf{v}\bigg]\leq \frac{1}{2} \big(\|\mathbf{u}\|^2+\|\mathbf{v}\|^2\big)\Big(|\omega_0|+\sqrt{\omega_1^2+\omega_2^2+\omega_3^2}\Big)
$$
\end{lemma}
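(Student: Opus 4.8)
The plan is to reduce the inequality to an estimate of the operator norm of the $2\times2$ matrix $M := \sum_{j=0}^3\omega_j\sigma_j$, and then to combine that with the Cauchy--Schwarz inequality and Lemma~\ref{lem:tech1}. Concretely, since $\bar{\mathbf u}^T M\mathbf v$ is a scalar, one has
$$\Big|\mathrm{Re}\big[\bar{\mathbf u}^T M\mathbf v\big]\Big| \le \big|\bar{\mathbf u}^T M\mathbf v\big| \le \|\mathbf u\|\,\|M\mathbf v\| \le \|M\|\,\|\mathbf u\|\,\|\mathbf v\| \le \frac{\|M\|}{2}\big(\|\mathbf u\|^2+\|\mathbf v\|^2\big),$$
where the last step is Lemma~\ref{lem:tech1} applied to $c=\|\mathbf u\|$ and $d=\|\mathbf v\|$ (equivalently, the arithmetic--geometric mean inequality). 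It therefore suffices to show that $\|M\| = |\omega_0| + \sqrt{\omega_1^2+\omega_2^2+\omega_3^2}$.

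For the norm, I would write $M = \omega_0\sigma_0 + \Omega$ with $\Omega := \omega_1\sigma_1+\omega_2\sigma_2+\omega_3\sigma_3$. Since the $\omega_j$ are real and the Pauli matrices are Hermitian, $M$ is Hermitian, so $\|M\|$ equals the largest modulus of its eigenvalues. Using the standard identity $(\vec a\cdot\vec\sigma)(\vec b\cdot\vec\sigma) = (\vec a\cdot\vec b)\,\sigma_0 + i(\vec a\times\vec b)\cdot\vec\sigma$ with $\vec a = \vec b = (\omega_1,\omega_2,\omega_3)$ gives $\Omega^2 = r^2\sigma_0$, where $r := \sqrt{\omega_1^2+\omega_2^2+\omega_3^2}$; as $\Omega$ is Hermitian and traceless, its eigenvalues are $\pm r$. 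Hence the eigenvalues of $M$ are $\omega_0\pm r$, and because $\max\{|\omega_0+r|,|\omega_0-r|\} = |\omega_0|+r$ for $r\ge0$ we obtain $\|M\| = |\omega_0| + r$, which is exactly what is needed.

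I do not expect a genuine obstacle; the only point requiring a moment's care is that the $\sigma_0$-part and the $\Omega$-part of $M$ commute, so that the extremal eigenvalue moduli of $M$ are obtained by adding $|\omega_0|$ and $r$ rather than, say, $\sqrt{\omega_0^2+r^2}$. If one prefers to sidestep the eigenvalue computation, the same estimate $\|M\mathbf v\|\le(|\omega_0|+r)\|\mathbf v\|$ follows by expanding $\|M\mathbf v\|^2 = \bar{\mathbf v}^T M^2\mathbf v$, using $M^2 = (\omega_0^2+r^2)\sigma_0 + 2\omega_0\Omega$ together with the bound $|\bar{\mathbf v}^T\Omega\mathbf v|\le r\|\mathbf v\|^2$ (itself an immediate consequence of $\Omega^2 = r^2\sigma_0$), which yields $\|M\mathbf v\|^2\le(|\omega_0|+r)^2\|\mathbf v\|^2$; the remainder of the argument is then as above.
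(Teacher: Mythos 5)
Your argument is correct and follows essentially the same route as the paper: bound the real part by the modulus, apply Cauchy--Schwarz with the operator norm of $\sum_{j}\omega_j\sigma_j$, use $2\|\mathbf{u}\|\|\mathbf{v}\|\le\|\mathbf{u}\|^2+\|\mathbf{v}\|^2$, and note that the eigenvalues of this Hermitian matrix are $\omega_0\pm\sqrt{\omega_1^2+\omega_2^2+\omega_3^2}$, so its norm is $|\omega_0|+\sqrt{\omega_1^2+\omega_2^2+\omega_3^2}$. The only cosmetic difference is that you obtain the eigenvalues via the Pauli algebra identity rather than from the explicit $2\times2$ matrix as the paper does.
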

\begin{proof}
Since the real part of a complex number is bounded by its modulus, we have
$$
  \mathrm{Re\,}\bigg[\bar{\mathbf{u}}^T \cdot \bigg(\sum_{j=0}^3 \omega_j \sigma_j\bigg)\cdot \mathbf{v}\bigg]\leq
\bigg\|\bar{\mathbf{u}}^T \cdot \bigg(\sum_{j=0}^3 \omega_j \sigma_j\bigg)\cdot \mathbf{v}\bigg\| = \|\mathbf{u}\|\|\mathbf{v}\|\bigg\|\sum_{j=0}^3 \omega_j \sigma_j\bigg\|,
$$
where the matrix norm is the operator norm. Using $2\|\mathbf{u}\|\|\mathbf{v}\|\leq \|\mathbf{u}\|^2+ \|\mathbf{v}\|^2$ we complete the proof noting that the eigenvalues of the matrix
$$
  \sum_{j=0}^3 \omega_j \sigma_j = \begin{pmatrix}\omega_0 + \omega_3 & \omega_1 - i \omega_2\\ \omega_1 + i \omega_2 & \omega_0 - \omega_3\end{pmatrix}
$$
are $\omega_0 \pm \sqrt{\omega_1^2+\omega_2^2+\omega_3^2}$.
\end{proof}

For the following particular choice of parameters $\omega_j$, $j=0,\dots, 3$
\begin{equation}
  \omega_0 = 4+\alpha\beta+|\gamma|^2\,,\quad \omega_1 = \alpha\beta+|\gamma|^2 -4\,, \quad \omega_2 = 4\,\mathrm{Im\,}\gamma\,,\quad \omega_3 = 4\,\mathrm{Re\,}\gamma\label{eq:omega}
\end{equation}
we denote the matrix $\sum_{j=0}^3 \omega_j \sigma_j$ by $\Sigma$. This matrix appears later in the text.

Let the subspaces of $D_j$, $j=1,2$ consist of functions $\psi^{(j)}\in H^1((-\infty,0))\oplus H^1((0,\infty))$ with $\begin{pmatrix}\psi^{(j)}(0+)\\\psi^{(j)}(0-)\end{pmatrix} = \mathrm{Span\,}\begin{pmatrix}K_+^{(j)}\\K_-^{(j)}\end{pmatrix}$, where $\begin{pmatrix}K_+^{(j)}\\K_-^{(j)}\end{pmatrix}$ are eigenvectors of $\Sigma$. Simple calculation yields up to a constant
\begin{subequations}\label{eq:kplusminus_subspaces}
\begin{align}
\hspace{-1em}
  K_+^{(1)} &= \omega_3-\sqrt{\omega_1^2+\omega_2^2+\omega_3^3} = 4\mathrm{Re\,}\gamma-\sqrt{(\alpha\beta+|\gamma|^2-4)^2+16|\gamma|^2}\,,\\
  K_-^{(1)} &= \omega_1+ i \omega_2 = \alpha\beta+|\gamma|^2-4+4 i \mathrm{Im\,}\gamma\,,\\
  K_+^{(2)} &= \omega_3+\sqrt{\omega_1^2+\omega_2^2+\omega_3^3} = 4\mathrm{Re\,}\gamma+\sqrt{(\alpha\beta+|\gamma|^2-4)^2+16|\gamma|^2}\,,\\
  K_-^{(2)} &= \omega_1+ i \omega_2 = \alpha\beta+|\gamma|^2-4+4 i \mathrm{Im\,}\gamma\,.
\end{align}
\end{subequations}

The following lemma generalizes \cite[Lemma 6]{EL18}
\begin{lemma}\label{lem:tech2}
For all $\delta>0$ and $\psi\in H^1((-\infty,0))\oplus H^1((0,\infty))$ it holds
$$
\delta|\psi(0\pm)|^2 \leq \pm\int_{0\pm}^{\pm\infty} \left(|\psi'(x)|^2+\delta^2|\psi(x)|^2\right)\,\mathrm{d}x\,.
$$
On the subspaces of $D_j$ these inequalities are saturated for
\begin{equation}
  \tilde \psi^{(j)}_\delta(x) := \left\{\begin{matrix}\frac{K_-^{(j)} \mathrm{e}^{\delta x}}{\sqrt{\delta (|K_+^{(j)}|^2+|K_-^{(j)}|^2)}} & \mathrm{for}\quad  x<0\\\frac{K_+^{(j)} \mathrm{e}^{-\delta x}}{\sqrt{\delta (|K_+^{(j)}|^2+|K_-^{(j)}|^2)}}  & \mathrm{for}\quad x>0\end{matrix}\right.\,, \label{eq:psitilde}
\end{equation}
with $j=1,2$ and the constants $K_\pm^{(j)}$ given by~\eqref{eq:kplusminus_subspaces}.
\end{lemma}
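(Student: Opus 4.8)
The plan is to reduce the asserted inequality to the elementary estimate of Lemma~\ref{lem:tech1} via the fundamental theorem of calculus, and then to extract the saturating profiles by tracking when each intermediate inequality becomes an equality.

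First I would treat the half-line $x>0$, the case $x<0$ being entirely analogous. Since $\psi\in H^1((0,\infty))$, the function $x\mapsto|\psi(x)|^2$ is absolutely continuous with integrable derivative $2\,\mathrm{Re}\,(\bar\psi'\psi)$, and $|\psi(x)|^2\to 0$ as $x\to\infty$ because $\psi\in L^2((0,\infty))$; hence
$$
  |\psi(0+)|^2 = -2\int_0^\infty \mathrm{Re}\,\big(\bar\psi'(x)\psi(x)\big)\,\mathrm{d}x.
$$
Applying Lemma~\ref{lem:tech1} with $c=\delta^{-1/2}\psi'(x)$ and $d=\delta^{1/2}\psi(x)$ gives $2\,|\mathrm{Re}\,(\bar\psi'\psi)|\le \delta^{-1}|\psi'|^2+\delta|\psi|^2$ pointwise, and integrating and multiplying by $\delta$ yields $\delta|\psi(0+)|^2\le\int_0^\infty(|\psi'|^2+\delta^2|\psi|^2)\,\mathrm{d}x$. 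The same computation on $(-\infty,0)$, where $|\psi(0-)|^2=2\int_{-\infty}^0\mathrm{Re}\,(\bar\psi'\psi)\,\mathrm{d}x$, gives the $0-$ inequality.

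Next, for the saturation claim I would retrace the two inequalities just used. Equality forces, for a.e.\ $x>0$, simultaneously $-\mathrm{Re}\,(\bar\psi'\psi)=|\mathrm{Re}\,(\bar\psi'\psi)|$ (so the real part is nonpositive) and equality in Lemma~\ref{lem:tech1}; writing $|c|^2+|d|^2\mp 2\,\mathrm{Re}\,(\bar cd)=|c\mp d|^2$ shows the latter means $c=\pm d$ with the sign matching that of $\mathrm{Re}\,(\bar cd)$, so that together with the sign constraint one obtains $\psi'(x)=-\delta\psi(x)$, i.e.\ $\psi(x)=\psi(0+)\,\mathrm{e}^{-\delta x}$. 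On $(-\infty,0)$ the sign is reversed, giving $\psi'(x)=\delta\psi(x)$, i.e.\ $\psi(x)=\psi(0-)\,\mathrm{e}^{\delta x}$. Thus both inequalities are equalities precisely for functions of this exponential shape, and on $D_j$ the boundary values $(\psi(0+),\psi(0-))$ are by definition a multiple of the eigenvector $(K_+^{(j)},K_-^{(j)})$ of $\Sigma$, so such a function is exactly $\tilde\psi^{(j)}_\delta$ of \eqref{eq:psitilde} up to the indicated normalization; since both sides of the inequality are homogeneous of degree two in $\psi$, the choice of normalization is immaterial for saturation, and the prefactor in \eqref{eq:psitilde} is the one for which $\delta|\tilde\psi^{(j)}_\delta(0\pm)|^2 = |K_\pm^{(j)}|^2 / (|K_+^{(j)}|^2+|K_-^{(j)}|^2)$, matching the corresponding half-line integral.

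There is no serious obstacle here; the only points deserving a little care are the justification of the boundary term — absolute continuity of $|\psi|^2$ and its decay at infinity for $H^1$ functions on a half-line — and the bookkeeping of the two equality cases in Lemma~\ref{lem:tech1} together with the sign of $\mathrm{Re}\,(\bar\psi'\psi)$, which is what singles out the decaying rather than the growing exponential on each side.
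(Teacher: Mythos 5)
Your proof is correct and follows essentially the paper's route: the pointwise bound $2|\mathrm{Re}(\bar\psi'\psi)|\le\delta^{-1}|\psi'|^2+\delta|\psi|^2$ you invoke is exactly the positivity of $\int_{0}^{\infty}|\psi'+\delta\psi|^2\,\mathrm{d}x$ and $\int_{-\infty}^{0}|\psi'-\delta\psi|^2\,\mathrm{d}x$ used in the paper, combined with the same boundary-term identity from the fundamental theorem of calculus. Your tracing of the equality cases to the exponential profiles, with boundary data along the eigenvector $(K_+^{(j)},K_-^{(j)})$, is just a more explicit version of the paper's ``direct inspection'' of \eqref{eq:psitilde}.
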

\begin{proof}
The proof is a minor modification of the one in \cite[Lemma 6]{EL18}. The result follows from the positivity of the integrals $\int_{-\infty}^0 |\psi'(x)-\delta \psi(x)|^2\,\mathrm{d}x$ and $\int_{0}^\infty |\psi'(x)+\delta \psi(x)|^2\,\mathrm{d}x$. The second part of the lemma can be verified by a direct inspection.
\end{proof}

With these preliminaries we can derive upper and lower bounds to the forms involved:
\begin{theorem}\label{thm:bounds}
We have the following inequalities:
\begin{align*}
  \mathbf{b}_1[\Psi] + \mathbf{b}_2[\Psi] + \mathbf{b}_3[\Psi] &\leq   \frac{|4+\alpha\beta+|\gamma|^2|+\sqrt{(\alpha\beta+|\gamma|^2-4)^2+16|\gamma|^2}}{2\sqrt{2}\beta}\, \mathbf{a}_0[\Psi], \\[.15em]
  \mathbf{b}_4[\Psi] &\leq \frac{\alpha}{\sqrt{2}} \mathbf{a}_0[\Psi].
\end{align*}
For $\beta\ne 0$
\begin{align*}
  \mathbf{a}_{\alpha,\beta,\gamma}[\Psi] &\geq \bigg(1-\frac{|4+\alpha\beta+|\gamma|^2|+\sqrt{(\alpha\beta+|\gamma|^2-4)^2+16|\gamma|^2}}{2\sqrt{2}\beta}\,\bigg)\mathbf{a}_0[\Psi]\\[.15em]
 &\geq \frac{1}{2}\left(1-\frac{|4+\alpha\beta+|\gamma|^2|+\sqrt{(\alpha\beta+|\gamma|^2-4)^2+16|\gamma|^2}}{2\sqrt{2}\beta}\right)\|\Psi\|^2\,.
\end{align*}
For $\beta = 0$
\begin{align*}
    \mathbf{a}_{\alpha,0,\gamma}[\Psi] &\geq \left(1-\frac{\alpha}{\sqrt{2}}\right)\mathbf{a}_0[\Psi] \geq \frac{1}{2}\left(1-\frac{\alpha}{\sqrt{2}}\right)\|\Psi\|^2\,.
\end{align*}
\end{theorem}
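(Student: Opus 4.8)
The plan is to prove the two upper bounds on the boundary forms first, since the lower bounds on $\mathbf{a}_{\alpha,\beta,\gamma}$ and $\mathbf{a}_{\alpha,0,\gamma}$ then follow at once. Writing $C:=\tfrac{1}{2\sqrt2}\big(|4+\alpha\beta+|\gamma|^2|+\sqrt{(\alpha\beta+|\gamma|^2-4)^2+16|\gamma|^2}\,\big)$ and using the decomposition $\mathbf{a}_{\alpha,\beta,\gamma}[\Psi]=\mathbf{a}_0[\Psi]+\tfrac1\beta(\mathbf{b}_1+\mathbf{b}_2+\mathbf{b}_3)[\Psi]$ of the previous theorem together with $|(\mathbf{b}_1+\mathbf{b}_2+\mathbf{b}_3)[\Psi]|\le C\,\mathbf{a}_0[\Psi]$ — the two-sided version of the first inequality, which is what the argument actually gives — one obtains $\mathbf{a}_{\alpha,\beta,\gamma}[\Psi]\ge(1-C/|\beta|)\mathbf{a}_0[\Psi]$, and likewise $\mathbf{a}_{\alpha,0,\gamma}[\Psi]\ge(1-|\alpha|/\sqrt2)\mathbf{a}_0[\Psi]$. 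The second, $L^2$-type estimates then come from $\mathbf{a}_0[\Psi]\ge\tfrac12\|\Psi\|^2$, which is just the bound $-\partial_y^2+y^2\ge1$ for the transverse harmonic oscillator applied at a.e.\ fixed $x$ and integrated, the $x$-derivative term being discarded. (The occurrences of $\beta$ in the statement are to be read as $|\beta|$, harmless by the $y\to-y$, $\alpha\to-\alpha$, $\beta\to-\beta$ symmetry noted in Section~\ref{sec:model}; the $L^2$ bounds are of course informative only when the bracket in front of $\|\Psi\|^2$ is non-negative.)

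Both upper bounds rest on two ingredients. At fixed $y\neq0$ a routine slicing argument gives $\Psi(\cdot,y)\in H^1((-\infty,0))\oplus H^1((0,\infty))$ for a.e.\ $y$, so Lemma~\ref{lem:tech2} applies on each half-line with $\delta=|y|/\sqrt2$; adding the two inequalities yields
\[
|y|\,\big(|\Psi(0+,y)|^2+|\Psi(0-,y)|^2\big)\;\le\;\sqrt2\int_{\mathbb{R}}\Big(\Big|\tfrac{\partial\Psi}{\partial x}\Big|^2+\tfrac{y^2}{2}\,|\Psi|^2\Big)\,\mathrm{d}x .
\]
It thus suffices to dominate each boundary integrand by a constant multiple of $|y|\big(|\Psi(0+,y)|^2+|\Psi(0-,y)|^2\big)$, integrate in $y$, and drop the non-negative $\tfrac12|\partial_y\Psi|^2$ term to land on $\mathbf{a}_0[\Psi]$.

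For $\mathbf{b}_1+\mathbf{b}_2+\mathbf{b}_3$ one expands $|f_\pm|^2$ and $\mathrm{Re}(\gamma\bar f_+f_-)$ (with $f_\pm=\Psi(0+,y)\pm\Psi(0-,y)$ as in the previous proof) in terms of $\Psi(0+,y)$ and $\Psi(0-,y)$; a short computation then gives
\[
\mathbf{b}_1[\Psi]+\mathbf{b}_2[\Psi]+\mathbf{b}_3[\Psi]=\frac14\int_{\mathbb{R}} y\;\bar{\mathbf{u}}^T\,\Sigma\,\mathbf{u}\;\mathrm{d}y ,\qquad \mathbf{u}=\mathbf{u}(y):=\big(\Psi(0+,y),\Psi(0-,y)\big)^T,
\]
with exactly the matrix $\Sigma=\sum_{j=0}^3\omega_j\sigma_j$ of \eqref{eq:omega}. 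By Lemma~\ref{lem:tech3} (applied with both of its vectors equal to $\mathbf{u}$, and since $\bar{\mathbf{u}}^T\Sigma\mathbf{u}$ is real) $\big|\bar{\mathbf{u}}^T\Sigma\mathbf{u}\big|\le\big(|\omega_0|+\sqrt{\omega_1^2+\omega_2^2+\omega_3^2}\big)\|\mathbf{u}\|^2$, where $\sqrt{\omega_1^2+\omega_2^2+\omega_3^2}=\sqrt{(\alpha\beta+|\gamma|^2-4)^2+16|\gamma|^2}$ and $\|\mathbf{u}(y)\|^2=|\Psi(0+,y)|^2+|\Psi(0-,y)|^2$. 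Multiplying by $|y|$, invoking the displayed consequence of Lemma~\ref{lem:tech2}, integrating in $y$ and dropping the $\partial_y$ term produces the asserted bound with the constant $C$. The estimate for $\mathbf{b}_4$ is the same argument in the case where no matrix is needed: one simply uses $|\Psi(0+,y)+\Psi(0-,y)|^2\le2(|\Psi(0+,y)|^2+|\Psi(0-,y)|^2)$ in the integrand of $\mathbf{b}_4$, which yields the constant $|\alpha|/\sqrt2$.

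There is no real analytic difficulty — Lemmas~\ref{lem:tech2} and~\ref{lem:tech3} carry all the work — and what remains is bookkeeping. The step that repays attention is the displayed identity for $\mathbf{b}_1+\mathbf{b}_2+\mathbf{b}_3$, i.e.\ verifying that expanding the three boundary terms really yields $\tfrac14\,\bar{\mathbf{u}}^T\Sigma\mathbf{u}$ with precisely the $\omega_j$ of \eqref{eq:omega} (this is exactly why $\Sigma$ and those parameters were introduced in the preceding pages), together with keeping the constants straight: the $\tfrac14$ produced by $f_\pm=\Psi(0+,y)\pm\Psi(0-,y)$ and the $\sqrt2$ coming from Lemma~\ref{lem:tech2} with $\delta=|y|/\sqrt2$, whose product is the $\tfrac{1}{2\sqrt2}$ in $C$.
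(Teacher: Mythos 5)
Your proof is correct, but it takes a genuinely different route from the paper's. The paper expands $\Psi$ in the Hermite basis \eqref{eq:separation}, uses the recurrence \eqref{eq:hermite} to absorb the factor $y$, arrives at the mode-paired identity \eqref{eq:pauli} coupling $\psi_n$ and $\psi_{n-1}$, and then combines Lemma~\ref{lem:tech3}, a re-indexing with $\sqrt{n}+\sqrt{n+1}\le 2\sqrt{n+\frac12}$, and Lemma~\ref{lem:tech2} mode by mode with $\delta=\sqrt{n+\frac12}$, summing via \eqref{eq:a0}. You instead stay in $y$-space: your pointwise identity $\mathbf{b}_1+\mathbf{b}_2+\mathbf{b}_3=\frac14\int_{\mathbb{R}} y\,\bar{\mathbf{u}}^T\Sigma\,\mathbf{u}\,\mathrm{d}y$ is correct (it is the pointwise analogue of \eqref{eq:pauli}), and Lemma~\ref{lem:tech2} with $\delta=|y|/\sqrt2$ on a.e.\ slice gives the same constant $\frac{|4+\alpha\beta+|\gamma|^2|+\sqrt{(\alpha\beta+|\gamma|^2-4)^2+16|\gamma|^2}}{2\sqrt2}$ as the paper's computation; note that this is indeed the constant the paper's proof produces, the $\beta$ in the first display of the theorem belonging with the $\frac1\beta$ prefactor when one passes to the lower bound, exactly as your reading makes explicit. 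What each approach buys: yours is shorter, avoids the expansion and the interchange of sums, and makes transparent the two-sided estimate $|\mathbf{b}_1+\mathbf{b}_2+\mathbf{b}_3|\le C\,\mathbf{a}_0$, which is what the lower bound on $\mathbf{a}_{\alpha,\beta,\gamma}$ actually requires since the boundary term may be negative; the paper's route earns its extra bookkeeping by installing precisely the transverse-mode structure and the $\Sigma$-pairing of adjacent coefficients that are reused for the Jacobi operator in Sections~\ref{sec:jacobi}--\ref{sec:dicrete}. Two small points to tidy in your write-up: Lemma~\ref{lem:tech3} as stated bounds only the real part from above, so to get $|\bar{\mathbf{u}}^T\Sigma\,\mathbf{u}|$ you should either apply it a second time with $\mathbf{u}$ replaced by $i\mathbf{u}$ composed with a sign flip of the $\omega_j$ (equivalently, quote the Hermitian eigenvalue bound from its proof directly), and the slicing step -- that for a.e.\ $y$ the section $x\mapsto\Psi(x,y)$ lies in $H^1$ on each half-line with boundary values agreeing with the traces entering $\mathbf{b}_j$ -- deserves the one-line Fubini/density justification you allude to.
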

\begin{proof}
We proceed a way similar to \cite{EL18}, where the bound for $\mathbf{b}_1[\Psi]$ was obtained, and to \cite{Sol04b}, where bounds for $\mathbf{b}_2$ and $\mathbf{b}_4$ were found, cf. \cite[Lemma 2.1]{Sol04b} or \cite[Theorem~4]{EL18} for more details. To get the first inequality, one has to estimate the expression $\mathbf{b}_1[\Psi]+\mathbf{b}_2[\Psi]+\mathbf{b}_3[\Psi]$. In view of the mentioned results, we can skip a part of the computation related to the first two terms and focus on the form $\mathbf{b}_3[\Psi]$ only.

As in \cite{EL18,Sol04b}, we use the expansion of the function in terms of the `transverse' basis,
\begin{equation}
  \Psi(x,y) = \sum_{n\in \mathbb{N}_0}\psi_n(x)\chi_n(y),\label{eq:separation}
\end{equation}
with the coefficients $\psi_n$ depending on the variable $x$ only, where $\chi_n(y)$ is the $n$-th Hermite function, that is, the normalized harmonic oscillator eigenfunction referring to the eigenvalue $n+\frac12$, and $\mathbb{N}_0$ denotes the set of non-negative integers. We note that the Ansatz~\eqref{eq:separation} can be also used to analyze the model numerically as it was done in \cite{ELT17} for the eigenvalues and resonances appearing in the subcritical case.

The quadratic form $\mathbf{a}_0$ can be in terms of the coefficient functions $\psi_n$ written as
\begin{equation}
  \mathbf{a}_0[\Psi] = \sum_{n\in\mathbb{N}_0}\int_\mathbb{R}\Big(|\psi_n'(x)|^2+\big(n+\textstyle{\frac{1}{2}}\big)|\psi_n(x)|^2\Big)\,\mathrm{d}x,\label{eq:a0}
\end{equation}
giving the bound
$$
  \mathbf{a}_0[\Psi]  \geq \frac{1}{2}\sum_{n\in\mathbb{N}_0}\int_\mathbb{R}|\psi_n(x)|^2\,\mathrm{d}x = \frac{1}{2}\|\Psi\|^2.
$$
Moreover, one can use the well-known recurrent relation for Hermite functions,
\begin{equation}
  \sqrt{n+1}\chi_{n+1}(y)-\sqrt{2}y\chi_n(y)+\sqrt{n}\chi_{n-1}(y) = 0.\label{eq:hermite}
\end{equation}
Using the Ansatz \eqref{eq:separation} in the definition of $\mathbf{b}_3[\Psi]$ substituting subsequently for $y\chi_n(y)$ from \eqref{eq:hermite}, we get
\begin{align*}
  \mathbf{b}_3[\Psi] &= \sum_{n,m\in\mathbb{N}_0} \int_{\mathbb{R}} y\, \mathrm{Re}\big[\gamma\big(\bar \psi_m(0+)+\psi_m(0-)\big)\bar\chi_m(y) \big(\psi_n(0+)-\psi_n(0-)\big)\chi_n(y)\big]\,\mathrm{d}y \\[.15em]
  &= \frac{1}{\sqrt{2}}\sum_{n,m\in\mathbb{N}_0}  \int_{\mathbb{R}} \bar\chi_m(y)\,\mathrm{Re}\big\{\gamma \big(\bar \psi_m(0+)+\psi_m(0-)\big) \\
  &\quad\times \big(\psi_{n}(0+)-\psi_{n}(0-)\big) \big(\sqrt{n+1}\chi_{n+1}(y)+\sqrt{n}\chi_{n-1}(y)\big)\big\}\,\mathrm{d}y.
\end{align*}
Using the orthogonality of Hermite functions, $\int_\mathbb{R} \bar\chi_m(y)\chi_n(y)\,\mathrm{d}y = \delta_{mn}$, we further obtain
\begin{align*}
  \mathbf{b}_3[\Psi] & =  \frac{1}{\sqrt{2}}\sum_{n\in\mathbb{N}_0} \mathrm{Re}\big\{\gamma \big[\sqrt{n+1}\big(\bar\psi_{n+1}(0+)+\bar\psi_{n+1}(0-)\big)+\sqrt{n} \big(\bar\psi_{n-1}(0+)+\bar\psi_{n-1}(0-)\big)\big]
\\[.15em]
 & \quad\times (\psi_n(0+)-\psi_n(0-)\big)\big\}
\\[.15em]
  & = \frac{1}{\sqrt{2}}\sum_{n\in\mathbb{N}}\mathrm{Re}\big\{\gamma \big[\sqrt{n} \big(\bar\psi_n(0+)+\bar\psi_n(0-)\big) \big(\psi_{n-1}(0+)-\psi_{n-1}(0-)\big)
\\[.15em]
 & \quad +\sqrt{n}\big(\bar\psi_{n-1}(0+)+\bar\psi_{n-1}(0-)\big) \big(\psi_{n}(0+)-\psi_{n}(0-)\big)\big]\big\}
\\[.15em]
  & = \sum_{n\in\mathbb{N}}\sqrt{2n}\,\mathrm{Re}\big\{\mathrm{Re\,}\gamma \big[\bar \psi_n(0+)\psi_{n-1}(0+)-\bar\psi_n(0-)\psi_{n-1}(0-)\big]
\\[.15em]
  &\quad +i\,\mathrm{Im\,}\gamma \big[\bar\psi_{n}(0-)\psi_{n-1}(0+)-\bar\psi_n(0+)\psi_{n-1}(0-)\big]\big\}.
\end{align*}
In the first term of the second equality, we have changed the summation index from $n$ to $n-1$; notice that the sum now runs over $\mathbb{N}$, not $\mathbb{N}_0$.
In an analogous way, the other two terms entering the form $\mathbf{a}_{\alpha,\beta,\gamma}[\Psi]$ can be written as
\begin{align*}
  \mathbf{b}_1[\Psi] + \mathbf{b}_2[\Psi]  & = \sum_{n\in \mathbb{N}}\sqrt{2n} \,\mathrm{Re\,} \Big[\big(\bar\psi_{n}(0+)-\bar\psi_n(0-)\big) \big(\psi_{n-1}(0+)-\psi_{n-1}(0-)\big)
\\[.15em]
  &\quad + \frac{\alpha\beta+|\gamma|^2}{4} \big(\bar\psi_{n}(0+)+\bar\psi_n(0-)\big) \big(\psi_{n-1}(0+)+\psi_{n-1}(0-)\big)\Big]
\\[.15em]
  & = \sum_{n\in \mathbb{N}}\sqrt{2n}\, \mathrm{Re\,} \Big[\frac{4+\alpha\beta+|\gamma|^2}{4} \big(\bar\psi_n(0+)\psi_{n-1}(0+)+\bar\psi_n(0-)\psi_{n-1}(0-)\big)
\\[.15em]
  &\quad +\frac{\alpha\beta+|\gamma|^2-4}{4} \big(\bar\psi_n(0+)\psi_{n-1}(0-)+\bar\psi_n(0-)\psi_{n-1}(0+)\big)\Big].
\end{align*}
The sum of all the three forms can be written elegantly using the Pauli matrices and the identity matrix,
\begin{equation}
  \mathbf{b}_1[\Psi] + \mathbf{b}_2[\Psi] +\mathbf{b}_3[\Psi] = \sum_{n\in\mathbb{N}}\frac{\sqrt{n}}{2\sqrt{2}} \,\mathrm{Re\,}\left[\bar{\mathbf{u}}^T  \cdot \Sigma\cdot \mathbf{v} \right],\label{eq:pauli}
\end{equation}
where the notation is the same as in Lemma~\ref{lem:tech3}, $\mathbf{u} = \begin{pmatrix}\psi_n(0+)\\ \psi_n(0-)\end{pmatrix}$ and $\mathbf{v} = \begin{pmatrix}\psi_{n-1}(0+)\\ \psi_{n-1}(0-)\end{pmatrix}$, and the numbers $\omega_j$ are given by~\eqref{eq:omega}.
Applying now Lemma~\ref{lem:tech3}, we obtain
\begin{align*}
  \mathbf{b}_1[\Psi] + \mathbf{b}_2[\Psi] +\mathbf{b}_3[\Psi] &\leq \sum_{n\in\mathbb{N}}\frac{\sqrt{n}}{4\sqrt{2}}\, \big(|\psi_{n}(0+)|^2+|\psi_{n}(0-)|^2
\\[.15em]
&\quad +|\psi_{n-1}(0+)|^2+|\psi_{n-1}(0-)|^2\big)
\\[.15em]
 & \quad \times \big[|4+\alpha\beta+|\gamma|^2|+\sqrt{(\alpha\beta+|\gamma|^2-4)^2+16|\gamma|^2}\big].
\end{align*}
We divide the sum into two parts and in the part containing $\psi_{n-1}$ we raise the summation index by one; in this way we get
\begin{align*}
  \mathbf{b}_1[\Psi] + \mathbf{b}_2[\Psi] +\mathbf{b}_3[\Psi] &\leq \sum_{n\in\mathbb{N}_0}\frac{\sqrt{n}+\sqrt{n+1}}{4\sqrt{2}} \big(|\psi_{n}(0+)|^2+|\psi_{n}(0-)|^2\big)
\\[.15em]
 & \quad\times \big[|4+\alpha\beta+|\gamma|^2|+\sqrt{(\alpha\beta+|\gamma|^2-4)^2+16|\gamma|^2}\big].
\end{align*}
Furthermore, we employ the inequality $\sqrt{n}+\sqrt{n+1}\leq 2\sqrt{n+\frac{1}{2}}$ which yields
\begin{align*}
  \mathbf{b}_1[\Psi] + \mathbf{b}_2[\Psi] +\mathbf{b}_3[\Psi] &\leq \frac{|4+\alpha\beta+|\gamma|^2|+\sqrt{(\alpha\beta+|\gamma|^2-4)^2+16|\gamma|^2}}{2\sqrt{2}}
\\
  & \quad \times \sum_{n\in\mathbb{N}_0} \sqrt{n+\textstyle{\frac{1}{2}}}\, \big(|\psi_{n}(0+)|^2+|\psi_{n}(0-)|^2\big).
\end{align*}
Applying now Lemma~\ref{lem:tech2} with $\delta = \sqrt{n+\frac{1}{2}}$ we arrive at
\begin{align*}
  \mathbf{b}_1[\Psi] + \mathbf{b}_2[\Psi] +\mathbf{b}_3[\Psi] & \leq \frac{|4+\alpha\beta+|\gamma|^2|+\sqrt{(\alpha\beta+|\gamma|^2-4)^2+16|\gamma|^2}}{2\sqrt{2}}
\\[.15em]
  & \quad \times \sum_{n\in\mathbb{N}_0} \int_{\mathbb{R}}\big[|\psi'(x)|^2+ \big(n+\textstyle{\frac{1}{2}}\big)|\psi(x)|^2\big]\,\mathrm{d}x
\\[.15em]
& = \frac{|4+\alpha\beta+|\gamma|^2|+\sqrt{(\alpha\beta+|\gamma|^2-4)^2+16|\gamma|^2}}{2\sqrt{2}}\, \mathbf{a}_0[\Psi].
\end{align*}
The obtained upper bound for $\mathbf{b}_1[\Psi]+\mathbf{b}_2[\Psi]+ \mathbf{b}_3[\Psi]$ leads to the lower bound for $\mathbf{a}_{\alpha,\beta,\gamma}[\Psi]$. The second inequality follows from $\mathbf{a}_0[\Psi]\geq \frac{1}{2}\|\Psi\|^2$ which is a consequence of \eqref{eq:a0}.

The last inequality, the lower bound to $\mathbf{a}_{\alpha,0,\gamma}[\Psi]$ for $\beta=0$, follows from an upper bound to $\mathbf{b}_4[\Psi]$ that can be obtained in a way similar to
\cite[Lemma 2.1]{Sol04b} or \cite[Theorem 4]{EL18} with the use of Lemma~\ref{lem:tech1}.
\end{proof}

The form $\mathbf{a}_{\alpha,\beta,\gamma}[\Psi]$ is closed on $D$. First, the form $\mathbf{a}_{0}[\Psi]$ is closed, as proven in \cite{BE17b}. The closedness of the full form can be obtained by the same reasoning as in \cite[Proposition 2.2]{BE17b} using the bounds in Theorem~\ref{thm:bounds}; as in the said paper, the argument applies only if the form is semibounded from below.

\section{The Jacobi operator}\label{sec:jacobi}

The second main step of our argument is the construction of the Jacobi operator associated with $\mathbf{H}_{\alpha,\beta,\gamma}$. Let us begin with the case $\beta \ne 0$. First of all, we rewrite the matching conditions \eqref{eq:coupl1} and~\eqref{eq:coupl2} in an equivalent form \cite{EG99},
\begin{align*}
 \frac{\partial \Psi}{\partial x}(0+,y) &= \frac{y}{4\beta} \big[(\alpha \beta+|\gamma|^2+4+4\,\mathrm{Re\,}\gamma)\Psi(0+,y)
\\[.15em]
  & \quad +(\alpha \beta+ |\gamma|^2 -4-4i\,\mathrm{Im\,}\gamma)\Psi(0-,y)\big],
\\[.15em]
 \frac{\partial \Psi}{\partial x}(0-,y) &= \frac{y}{4\beta} \big[(-\alpha \beta-|\gamma|^2+4 - 4 i\, \mathrm{Im\,}\gamma)\Psi(0+,y)
\\[.15em]
 & \quad +(-\alpha \beta-|\gamma|^2-4+4\,\mathrm{Re\,}\gamma)\Psi(0-,y)]\big.
\end{align*}
Mimicking the construction in \cite{NS06,EL18} we use the Ansatz \eqref{eq:separation} and the recurrent relation \eqref{eq:hermite}, multiply both equations from the left by $\bar\chi_m(y)$ and integrate over $\mathbb{R}$ obtaining thus conditions for the coefficient functions,
\begin{subequations}
\label{eq:coupl5&6}
\begin{align}
  \frac{\partial \psi_m}{\partial x}(0+) &=  \frac{1}{4\sqrt{2}\beta} \big[(\alpha \beta+|\gamma|^2+4+4\mathrm{Re\,}\gamma)(\psi_{m-1}(0+)\sqrt{m}+\psi_{m+1}(0+)\sqrt{m+1})\nonumber
\\[.15em]
& +(\alpha \beta+|\gamma|^2-4-4i\mathrm{Im\,}\gamma)(\psi_{m-1}(0-)\sqrt{m}+\psi_{m+1}(0-)\sqrt{m+1})\big],\label{eq:coupl5}
  \\[.15em]
  \frac{\partial \psi_m}{\partial x}(0-) &=  \frac{1}{4\sqrt{2}\beta} \big[(-\alpha \beta-|\gamma|^2+4-4i\mathrm{Im\,}\gamma)(\psi_{m-1}(0+)\sqrt{m}+\psi_{m+1}(0+)\sqrt{m+1})\nonumber
\\[.15em]
& +(-\alpha \beta-|\gamma|^2-4+4\mathrm{Re\,}\gamma)(\psi_{m-1}(0-)\sqrt{m}+\psi_{m+1}(0-)\sqrt{m+1})\big].\label{eq:coupl6}
\end{align}
\end{subequations}
We seek solutions of the form
\begin{subequations}
\label{eq:phi1&2}
\begin{align}
  \phi_n(x,\Lambda) &= k_1(\Lambda,n)\, \mathrm{e}^{-\zeta_n(\Lambda)x},\quad x>0,\label{eq:phi1}\\[.15em]
 \phi_n(x,\Lambda) &= k_2(\Lambda,n)\, \mathrm{e}^{\zeta_n(\Lambda)x},\quad  \;\;\, x<0,\label{eq:phi2}
\end{align}
\end{subequations}
where $\zeta_n(\Lambda) := \sqrt{n+\frac12-\Lambda}$, to the equation
\begin{equation}
  - \phi_n''(x)+\left(n+\frac12-\Lambda\right) \phi_n(x) = 0\,,\quad n\in \mathbb{N}_0, \label{eq-un}
\end{equation}
with $\phi_n \in H^2(-\infty,0)\oplus H^2(0,\infty)$ satisfying the conditions \eqref{eq:coupl5&6}. In the definition of $\zeta_n(\Lambda)$ we take the branch of the square root which is analytic in $\mathbb{C}\backslash [n+\frac12,\infty)$ and for a number~$\Lambda$ from this set it holds
$$
  \mathrm{Re\,}\zeta_n(\Lambda) >0\,,\quad \mathrm{Im\,}\zeta_n(\Lambda)\cdot \mathrm{Im\,}\Lambda<0.
$$
Substituting the Ans\"atze \eqref{eq:phi1&2} into \eqref{eq:coupl5&6} we get
\begin{align*}
  -\zeta_n(\Lambda) k_1(\Lambda,n) &= \frac{1}{4\sqrt{2}\beta} \big[(\alpha \beta+|\gamma|^2+4+4\mathrm{Re\,}\gamma)(k_1(\Lambda, n-1)\sqrt{n}\nonumber
\\[.15em]
& \quad +k_1(\Lambda, n+1)\sqrt{n+1}) +(\alpha \beta+|\gamma|^2-4-4i\mathrm{Im\,}\gamma) 
  \\[.15em]
& \quad \times (k_2(\Lambda, n-1)\sqrt{n}+k_2(\Lambda, n+1)\sqrt{n+1})\big],
\\[.15em]
  \zeta_n(\Lambda) k_2(\Lambda,n) &= \frac{1}{4\sqrt{2}\beta} \big[(-\alpha \beta-|\gamma|^2+4-4i\mathrm{Im\,}\gamma)(k_1(\Lambda, n-1)\sqrt{n}\sqrt{n+1})\nonumber
\\[.15em]
& \quad +k_1(\Lambda, n+1) +(-\alpha \beta-|\gamma|^2-4+4\mathrm{Re\,}\gamma)(k_2(\Lambda, n-1)\sqrt{n}
\\[.15em]
& \quad +k_2(\Lambda, n+1)\sqrt{n+1})\big].
\end{align*}
Adding and subtracting the previous two equations we get
\begin{align*}
  \zeta_n(\Lambda) \big[k_2(\Lambda,n)-k_1(\Lambda,n)] &= \frac{2}{\sqrt{2}\beta}[(k_1(\Lambda,n-1)-k_2(\Lambda,n-1))\sqrt{n} \nonumber
\\[.15em]
& +(k_1(\Lambda,n+1)-k_2(\Lambda,n+1))\sqrt{n+1}\big]\nonumber
\\[.15em]
& +\frac{\bar\gamma}{\sqrt{2}\beta} \big[(k_1(\Lambda,n-1)+k_2(\Lambda,n-1))\sqrt{n}\nonumber
\\[.15em]
& +(k_1(\Lambda,n+1)+k_2(\Lambda,n+1))\sqrt{n+1}\big], 
\\[.15em]
  -\zeta_n(\Lambda) \big[k_1(\Lambda,n)+k_2(\Lambda,n)] &= \frac{\gamma}{\sqrt{2}\beta}[(k_1(\Lambda,n-1)-k_2(\Lambda,n-1))\sqrt{n}\nonumber
\\[.15em]
& +(k_1(\Lambda,n+1)-k_2(\Lambda,n+1))\sqrt{n+1}\big]\nonumber
\\[.15em]
& + \frac{\alpha\beta+|\gamma|^2}{2\sqrt{2}\beta} \big[(k_1(\Lambda,n-1)+k_2(\Lambda,n-1))\sqrt{n}\nonumber
\\[.15em]
& +(k_1(\Lambda,n+1)+k_2(\Lambda,n+1))\sqrt{n+1}\big]. 
\end{align*}
To simplify these relations, we define
$$
C_\pm(n) := k_1(\Lambda,n)\pm k_2(\Lambda,n),
$$
then the previous two equations can be rewritten as
$$
  -\zeta_n(\Lambda)\begin{pmatrix}C_-(n)\\C_+(n)\end{pmatrix} = \begin{pmatrix}\frac{2}{\sqrt{2}\beta} & \frac{\bar \gamma}{\sqrt{2}\beta}\\ \frac{\gamma}{\sqrt{2}\beta}& \frac{\alpha\beta+|\gamma|^2}{2\sqrt{2}\beta}\end{pmatrix}\begin{pmatrix}C_-(n-1)\sqrt{n}+C_-(n+1)\sqrt{n+1}\\C_+(n-1)\sqrt{n}+C_+(n+1)\sqrt{n+1}\end{pmatrix}
$$
Computing the eigenvalues and eigenspaces of the matrix involved in the previous equation and putting
\begin{subequations}
\label{eq:jacobic+-1&2}
\begin{align}
\hspace{-1.25em} C_{n,1} &:= \left(n+\textstyle{\frac{1}{2}}\right)^{-1/4} \big[-4\gamma C_-(n)+ \big(4-\alpha \beta-|\gamma|^2
+\sqrt{(\alpha\beta+|\gamma|^2-4)^2+16|\gamma|^2}\big)C_+(n)\big],\label{eq:jacobic+-1}
\\[.15em]
\hspace{-1.25em} C_{n,2} &:= \left(n+\textstyle{\frac{1}{2}}\right)^{-1/4} \big[4\gamma C_-(n)+ \big(-4+\alpha \beta+|\gamma|^2
+\sqrt{(\alpha\beta+|\gamma|^2-4)^2+16|\gamma|^2}\big)C_+(n)\big], \label{eq:jacobic+-2}
\end{align}
\end{subequations}
we find that $C_{n,1}$ satisfies the same equation as, for instance, in \cite{EL18}, namely
\begin{equation}
  (n+1)^{1/2}\left(n+\textstyle{\frac{3}{2}}\right)^{1/4}C_{n+1,1}+2\mu \left(n+\textstyle{\frac{1}{2}}\right)^{1/4}\zeta_n(\Lambda)C_{n,1}
  +n^{1/2}\left(n-\textstyle{\frac{1}{2}}\right)^{1/4}C_{n-1,1} = 0\label{eq:jacobi}
\end{equation}
with $\mu=\mu_1$ given by
\begin{subequations}
\label{eq:mu_1&2}
\begin{equation}
  \mu_1 := \frac{2\sqrt{2}\beta}{4+\alpha\beta+|\gamma|^2-\sqrt{(\alpha\beta+|\gamma|^2-4)^2+16|\gamma|^2}}\,. \label{eq:mu_1}
\end{equation}
Similarly, $C_{n,2}$ satisfies the same equation, the only difference is that with $\mu$ equals now to
\begin{equation}
  \mu_2 := \frac{2\sqrt{2}\beta}{4+\alpha\beta+|\gamma|^2+\sqrt{(\alpha\beta+|\gamma|^2-4)^2+16|\gamma|^2}}\,. \label{eq:mu_2}
\end{equation}
\end{subequations}

With the above mentioned symmetry $(\alpha,\beta)\to (-\alpha,-\beta)$ in mind, we can without the loss of generality take $\beta>0$. Then we easily find that for $\alpha>0$ it holds $\mu_1\geq \mu_2>0$, for $\alpha<0$ it holds $\mu_1<0<\mu_2$ and for $\alpha =0$ the parameter $\mu_1$ diverges.

The general solution of the equation~\eqref{eq-un} can be written in the form of linear combination of functions corresponding to the above-mentioned eigenspaces $\phi_n = C_{n,1}\eta_n^{(1)}+C_{n,2}\eta_n^{(2)}$, where
\begin{eqnarray*}
  \eta_n^{(1)} = \left\{\begin{matrix} (n+1/2)^{1/4}(\omega_3-\sqrt{\omega_1^2+\omega_2^2+\omega_3^2})\, \mathrm{e}^{-\zeta_n(\Lambda) x} & x>0\\
                                       (n+1/2)^{1/4}(\omega_1+i \omega_2)\, \mathrm{e}^{\zeta_n(\Lambda) x} & x<0\end{matrix}\right.\\
  \eta_n^{(2)} = \left\{\begin{matrix} (n+1/2)^{1/4}(\omega_3+\sqrt{\omega_1^2+\omega_2^2+\omega_3^2})\, \mathrm{e}^{-\zeta_n(\Lambda) x} & x>0\\
                                       (n+1/2)^{1/4}(\omega_1+i \omega_2)\, \mathrm{e}^{\zeta_n(\Lambda) x} & x<0\end{matrix}\right.
\end{eqnarray*}
with $\omega_j$ as in~\eqref{eq:omega}. There exist constants $C_1$, $C_2$ not dependent on $n$ so that
\begin{equation}
 C_1(\lambda) \leq \|\eta_n^{(1)}\|^2+ \|\eta_n^{(2)}\|^2 \leq C_2(\Lambda) \quad \mathrm{for\ all\ }n\in \mathbb{N}_0\label{eq:etabound}
\end{equation}
provided $\gamma\ne 0$ or $\alpha\beta\ne 4$. In the degenerate case $\gamma= 0$ and $\alpha\beta= 4$ it holds $\mu_1 = \mu_2$ and we have, for instance,
\begin{eqnarray*}
  \eta_n^{(1)} = \left\{\begin{matrix} (n+1/2)^{1/4}\, \mathrm{e}^{-\zeta_n(\Lambda) x} & \quad x>0\\
                                       0 & \quad x<0\end{matrix}\right.\,,\\
  \eta_n^{(2)} = \left\{\begin{matrix} 0 & \quad x>0\\
                                       (n+1/2)^{1/4}\, \mathrm{e}^{\zeta_n(\Lambda) x} & \quad x<0\end{matrix}\right.\,.
\end{eqnarray*}

The procedure for $\beta = 0$ is similar. We rewrite the coupling conditions as
 \begin{subequations}
\label{eq:beta=0coup1&2}
\begin{align}
  \Big(1+\frac{\bar \gamma}{2}\Big)\Psi(0+, y)-\Big(1-\frac{\bar \gamma}{2}\Big)\Psi(0-,y) &= 0,\label{eq:beta=0coup1}\\
  \Big(1-\frac{\gamma}{2}\Big)\frac{\partial \Psi}{\partial x}(0+,y)-\Big(1+\frac{\gamma}{2}\Big)\frac{\partial \Psi}{\partial x}(0-,y)& = \frac{\alpha y}{2} \big[\Psi(0+,y)+\Psi(0-,y)\big].\label{eq:beta=0coup2}
\end{align}
\end{subequations}
From \eqref{eq:beta=0coup1} we infer that there exist a sequence of coefficients $C_n$ such that
\begin{align*}
  k_1(\Lambda,n) &= \Big(n+\frac{1}{2}\Big)^{1/4}\Big(1-\frac{\bar\gamma}{2}\Big)C_n, \\[.15em]
  k_2(\Lambda,n) &= \Big(n+\frac{1}{2}\Big)^{1/4}\Big(1+\frac{\bar\gamma}{2}\Big)C_n,
\end{align*}
where $k_j$, $j=1,2$, are the coefficients in the solutions $\phi_j$, $j=1,2$, of~\eqref{eq:phi1} and~\eqref{eq:phi2}. After multiplying \eqref{eq:beta=0coup2} by $\bar\chi_m(y)$, integrating it using the orthogonality of Hermite functions, and substituting for $k_j$, $j=1,2$, we obtain again the equation~\eqref{eq:jacobi}, now with the parameter in the middle term taking the value $\mu := \frac{4+|\gamma|^2}{2\sqrt{2}\alpha}$. This is the departing point for the spectral analysis.

\section{Self-adjointness of the Hamiltonian}\label{sec:sa}
In this section, we prove that the Hamiltonian $\mathbf{H}_{\alpha,\beta,\gamma}$ is self-adjoint. The strategy of our proof is inspired by the proof of \cite[Thm. 4.1]{NS06}.

The vector $\Psi$ can be using~\eqref{eq:separation} identified with the sequence $\{\psi_n\}_{n\in\mathbb{N}_0}$ (further denoted only by $\{\psi_n\}$). The Hamiltonian $\mathbf{H}_{\alpha,\beta,\gamma}$ defined in~\eqref{model} can be alternatively defined via its action
$$
  \mathbf{H}_{\alpha,\beta,\gamma} \{\psi_n\} = \{-\psi_n''+(n+1/2)\psi_n\}\,.
$$
and the domain $D_{\alpha,\beta,\gamma}$. We say that the vector $\{\psi_n\}\in D_{\alpha,\beta,\gamma}$ if and only if $\psi_n\in \{u,u\upharpoonright_{\mathbb{R}_{\pm}}\in H^2(\mathbb{R}_{\pm})\} = \mathcal{W}$,  conditions~\eqref{eq:coupl5&6} are satisfied and  $\sum_{n\in \mathbb{N}_0}\| $$ -\psi_n''(x)+(n+1/2)\psi_n(x)\|^2 <\infty$. This construction identifies the $L^2(\mathbb{R}^2)$ space of $\Psi$ with the space $\mathfrak{H} = \ell^2\otimes L^2(\mathbb{R})$. Moreover, we define $\mathbf{H}^0_{\alpha,\beta,\gamma}:=\mathbf{H}_{\alpha,\beta,\gamma} \upharpoonright D^0_{\alpha,\beta,\gamma}$, where the domain $D^0_{\alpha,\beta,\gamma}\subset D_{\alpha,\beta,\gamma}$ consists of all ${\psi_n} \in D_{\alpha,\beta,\gamma}$ with finite number of non-zero elements.

\begin{theorem}
The Hamiltonian $\mathbf{H}_{\alpha,\beta,\gamma}$ is self-adjoint and coincides with the closure of $\mathbf{H}^0_{\alpha,\beta,\gamma}$.
\end{theorem}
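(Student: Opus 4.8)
The plan is to follow the strategy of \cite[Thm.~4.1]{NS06}, reducing the self-adjointness question to a question about the auxiliary Jacobi-type recurrence obtained in Section~\ref{sec:jacobi}. First I would observe that $\mathbf{H}^0_{\alpha,\beta,\gamma}$ is symmetric on $D^0_{\alpha,\beta,\gamma}$: the computation of the quadratic form in Section~\ref{sec:quadratic} shows that the boundary terms arising from integration by parts are exactly compensated by the matching conditions \eqref{eq:coupl5&6}, and for vectors with finitely many nonzero components there are no convergence issues. Since $\mathbf{H}^0_{\alpha,\beta,\gamma}$ is bounded below (Theorem~\ref{thm:bounds}), its closure $\overline{\mathbf{H}^0_{\alpha,\beta,\gamma}}$ is a densely defined closed symmetric operator, and it suffices to prove that it is essentially self-adjoint, equivalently that $\ker(\overline{\mathbf{H}^0_{\alpha,\beta,\gamma}}{}^* \mp i) = \{0\}$, or—more convenient here—that $\ker(\overline{\mathbf{H}^0_{\alpha,\beta,\gamma}}{}^* - \Lambda) = \{0\}$ for some non-real $\Lambda$ with a conveniently large imaginary part.

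Next I would describe the adjoint. A vector $\{\psi_n\}\in\mathfrak{H}$ lies in $\dom{\overline{\mathbf{H}^0_{\alpha,\beta,\gamma}}{}^*}$ and satisfies $\overline{\mathbf{H}^0_{\alpha,\beta,\gamma}}{}^*\{\psi_n\}=\Lambda\{\psi_n\}$ iff, testing against the finitely supported vectors in $D^0_{\alpha,\beta,\gamma}$, each $\psi_n$ solves $-\psi_n''+(n+\tfrac12-\Lambda)\psi_n=0$ on $\mathbb{R}\setminus\{0\}$ with $\psi_n\in H^2(\mathbb{R}_-)\oplus H^2(\mathbb{R}_+)$, so $\psi_n$ is a linear combination of $\mathrm{e}^{\pm\zeta_n(\Lambda)x}$ of the form \eqref{eq:phi1&2}, and in addition the pairing of boundary terms forces the matching conditions \eqref{eq:coupl5&6} to hold in the sequence sense. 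Substituting the exponential Ansatz and performing the diagonalization already carried out in Section~\ref{sec:jacobi}, the conditions reduce to the requirement that the transformed coefficients $\{C_{n,j}\}_{n\in\mathbb{N}_0}$, $j=1,2$, defined in \eqref{eq:jacobic+-1&2}, solve the three-term recurrence \eqref{eq:jacobi} with $\mu=\mu_j$, together with the $\ell^2$-type summability inherited from $\{\psi_n\}\in\mathfrak{H}$; the uniform two-sided bound \eqref{eq:etabound} on $\|\eta_n^{(j)}\|$ guarantees that square-summability of $\{\psi_n\}$ in $\mathfrak{H}$ is equivalent to square-summability of the weighted coefficient sequences appearing in \eqref{eq:jacobi}. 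Thus any deficiency vector produces a genuine $\ell^2$ solution of the Jacobi recurrence \eqref{eq:jacobi}.

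The heart of the argument is then to rule out such solutions. I would invoke the limit-point classification of the Jacobi matrix underlying \eqref{eq:jacobi}: its off-diagonal entries grow like $n^{1/2}(n\pm\tfrac12)^{1/4}\sim n$ while the diagonal entries involve the factor $2\mu_j(n+\tfrac12)^{1/4}\zeta_n(\Lambda)\sim 2\mu_j n^{3/4}$, so the standard Carleman-type criterion ($\sum 1/a_n=\infty$ fails, but the refined asymptotic/Carleman condition still applies—this is exactly the regime treated in \cite{NS06} for the $\delta$-case and in \cite{EL18}) shows the associated Jacobi operator is in the limit-point case at infinity for every non-real $\Lambda$, hence has no $\ell^2$ null vector. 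Feeding this back, both $\{C_{n,1}\}$ and $\{C_{n,2}\}$ must vanish identically; by invertibility of the linear map \eqref{eq:jacobic+-1&2} (whose determinant is a nonzero multiple of $\sqrt{(\alpha\beta+|\gamma|^2-4)^2+16|\gamma|^2}$, with the degenerate case $\gamma=0$, $\alpha\beta=4$ handled separately using the decoupled $\eta_n^{(j)}$ written out in Section~\ref{sec:jacobi}) it follows that all $C_\pm(n)=0$, hence all $\psi_n=0$. This gives triviality of the deficiency spaces, so $\overline{\mathbf{H}^0_{\alpha,\beta,\gamma}}$ is self-adjoint; since $\mathbf{H}_{\alpha,\beta,\gamma}$ is a symmetric extension of $\mathbf{H}^0_{\alpha,\beta,\gamma}$ contained in $\overline{\mathbf{H}^0_{\alpha,\beta,\gamma}}{}^*=\overline{\mathbf{H}^0_{\alpha,\beta,\gamma}}$, the two coincide.

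The main obstacle I anticipate is the limit-point analysis of the Jacobi recurrence \eqref{eq:jacobi}: one must carefully extract the large-$n$ asymptotics of the coefficients, in particular the behaviour of $\zeta_n(\Lambda)=\sqrt{n+\tfrac12-\Lambda}$ and of the quartic-root weights, and verify that the relevant subordinacy/Carleman condition genuinely forces the limit-point case for all non-real $\Lambda$ (rather than only for $|\Lambda|$ large), uniformly in the parameter $\mu_j$ including the sign changes and the divergence $\mu_1\to\infty$ at $\alpha=0$. The technical lemmas needed for this—essentially asymptotic comparison estimates for solutions of the recurrence—are the content of Appendix~\ref{sec:appa}, and the rest of the proof is bookkeeping of the reduction and the linear algebra of \eqref{eq:jacobic+-1&2}.
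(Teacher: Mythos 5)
Your overall architecture (reduce to the transverse-mode sequences, characterize deficiency vectors as $\ell^2$ solutions of the recurrence \eqref{eq:jacobi}, then rule these out) parallels the paper, but the step that actually kills the $\ell^2$ solutions is not justified by what you invoke. The recurrence \eqref{eq:jacobi} is \emph{not} of the form $(J-\Lambda)C=0$ for a fixed self-adjoint Jacobi matrix $J$: the spectral parameter enters nonlinearly through $\zeta_n(\Lambda)=\sqrt{n+\tfrac12-\Lambda}$ in every diagonal entry, and for non-real $\Lambda$ the matrix $\mathcal{J}(\Lambda,\mu)$ is non-self-adjoint. Hence a limit-point (Carleman) classification of a self-adjoint Jacobi operator cannot be cited to conclude ``no $\ell^2$ null vector''; incidentally, with off-diagonal entries growing like $n$ the Carleman sum $\sum 1/a_n$ diverges, so your parenthetical is also off, but that is beside the point. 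Even in the genuinely self-adjoint setting, limit point does not mean absence of an $\ell^2$-at-infinity solution — there is exactly one, and it is excluded as an eigenvector only through the equation at $n=0$ together with reality (here: dissipativity) of the coefficients. The paper supplies exactly this missing mechanism: the Birkhoff--Adams asymptotics of Lemma~\ref{lem:C:asymptotics} show that one of the two solutions is never square-summable and give the decay rate of the other, and the summation identity \eqref{eq:Cnidentity2} of Corollary~\ref{cor:Cnidentity}, exploiting the fixed sign of $\mathrm{Im}\,\zeta_n(\Lambda)$ for $\mathrm{Im}\,\Lambda\neq0$ and the vanishing of the boundary term $\sim N\,|C_{N+1}||C_N|$ along the decaying solution, forces that solution to vanish identically. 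Your sketch contains no substitute for this Green's-identity/dissipativity ingredient, and the Appendix does not contain a limit-point analysis that could stand in for it.

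There is a second, smaller gap at the end. From $\mathbf{H}^0_{\alpha,\beta,\gamma}\subset\mathbf{H}_{\alpha,\beta,\gamma}\subset\overline{\mathbf{H}^0_{\alpha,\beta,\gamma}}$ you may conclude only that $\mathbf{H}_{\alpha,\beta,\gamma}$ is essentially self-adjoint and contained in the closure; it does not follow that ``the two coincide'' — the operator $\mathbf{H}^0_{\alpha,\beta,\gamma}$ itself satisfies all your hypotheses and is not closed. The theorem asserts that $\mathbf{H}_{\alpha,\beta,\gamma}$, with the domain $D_{\alpha,\beta,\gamma}$, \emph{equals} the closure, so you also need the reverse inclusion ${\mathbf{H}^0_{\alpha,\beta,\gamma}}^*\subseteq\mathbf{H}_{\alpha,\beta,\gamma}$, i.e.\ a characterization of the whole domain of the adjoint, not only of its deficiency vectors. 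This is the first half of the paper's proof: testing against vectors supported in a single mode and vanishing near $x=0$ to get the $H^2$ regularity and the differential action, and then against the special finitely supported sequences \eqref{eq:ccu}, using Green's identity and the hermiticity of $\Sigma$, to show that every element of $\mathrm{dom}({\mathbf{H}^0_{\alpha,\beta,\gamma}}^*)$ obeys the coupling conditions \eqref{eq:coupl5&6}. Your paragraph ``describing the adjoint'' is phrased only for solutions of the eigenvalue equation, so it does not deliver this inclusion.
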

\begin{proof}
We give the proof for the general case. We proceed similarly to \cite[Thm. 4.1]{NS06}. First we prove that
\begin{equation}\label{eq:h=h0*}
\mathbf{H}_{\alpha,\beta,\gamma} = {\mathbf{H}^0_{\alpha,\beta,\gamma}}^*\,.
\end{equation}
The inclusion $\subset$ is trivial. We will prove the opposite inclusion. Suppose $\{v_n\}\subset \mathrm{Dom\,}({\mathbf{H}^0_{\alpha,\beta,\gamma}}^*)$ and $ {\mathbf{H}^0_{\alpha,\beta,\gamma}}^* \{v_n\} = \{w_n\}$. The definition of the adjoint operator yields
\begin{equation}
  \sum_{n\in\mathbb{N}_0}\int_{\mathbb{R}}(-u_n''+(n+1/2)u_n)\overline{v_n}\,\mathrm{d}x = \sum_{n\in\mathbb{N}_0}\int_{\mathbb{R}}u_n\overline{w_n}\,\mathrm{d}x\,.\label{eq:sa:adjoint}
\end{equation}
Let us fix $n_0\in \mathbb{N}$ and take the element $\{u_n\}\in D^0_{\alpha,\beta,\gamma}$ such that for a fixed $n_0\in \mathbb{N}_0$ it holds $u_n \equiv 0$ for $n\ne n_0$ and $u_{n_0}\in \mathcal{W}$, $u_{n_0}=0$ in the vicinity of $x=0$. Then equation~\eqref{eq:sa:adjoint} applied to all such $\{u_n\}$ implies that if $v\in \mathrm{Dom\,}({\mathbf{H}^0_{\alpha,\beta,\gamma}}^*)$, $w_n\in \mathcal{W}$ and $w_n = -v_n''+(n+1/2)v_n$ for all $n\in \mathbb{N}_0$. Moreover, $w_n\in l^2(\mathbb{N}_0)$ implies that $\sum_{n\in \mathbb{N}_0}\|-\psi_n''(x)+(n+1/2)\psi_n(x)\|^2 <\infty$.

As the final part of the proof of the equality~\eqref{eq:h=h0*}, we prove that the coupling conditions~\eqref{eq:coupl5&6} are satisfied for $\{v_n\}$. We choose $\{u_n\}$ such that for a fixed $n_0$
 \begin{subequations} \label{eq:ccu}
 \begin{align}
  u_{n_0\pm 1} (0\pm) = 0\,,\quad u'_{n_0}(0\pm)=0\,,\quad u_{n_0}(0\pm) = h_{\pm}\,,
\\
	u_n(x)\equiv 0\quad \mathrm{for\ all\ }n\not\in\{n_0-1,n_0, n_0+1\}\,,
 \\
  \ \begin{pmatrix}u'_{n_0+ 1}(0+)\\-u'_{n_0+ 1}(0-)\end{pmatrix} = \frac{\sqrt{n_0+1}}{4\sqrt{2}\beta}\Sigma\begin{pmatrix}h_+\\h_-\end{pmatrix}\,,\quad
  \begin{pmatrix}u'_{n_0- 1}(0+)\\-u'_{n_0- 1}(0-)\end{pmatrix} = \frac{\sqrt{n_0}}{4\sqrt{2}\beta}\Sigma\begin{pmatrix}h_+\\h_-\end{pmatrix}\,.
  \end{align}
\end{subequations}
Clearly, such $\{u_n\}$ satisfies~\eqref{eq:coupl5&6} and belongs to $D^0_{\alpha,\beta,\gamma}$. The equation~\eqref{eq:sa:adjoint} implies by Green's identity that
$$
   \sum_{n\in \mathbb{N}_0}[u_n'(x)\overline{v_n}(x)-u_n(x)\overline{v'_n}(x)]_{0-}^{0+} = 0\,.
$$
From this, \eqref{eq:ccu} and the hermiticity of the matrix $\Sigma$ it follows that $\{v_n\}$ satisfies~\eqref{eq:coupl5&6}. This proves the equation~\eqref{eq:h=h0*}.

In the second part of the proof, we find that the deficiency indices of the operator $\mathbf{H}_{\alpha,\beta,\gamma}$ are zero. We will prove that the only solution to the equation
\begin{equation}
  \mathbf{H}_{\alpha,\beta,\gamma} V - i V = 0 \label{eq:defind}
\end{equation}
is trivial. The equation~\eqref{eq:defind} is the equation~\eqref{eq-un} for $\Lambda=i$. For its general solution $V = \{\phi_n\} = \{C_{n,1}\eta_n^{(1)}+C_{n,2}\eta_n^{(2)}\}$ we have due to $\int_{\mathbb{R}}\eta_n^{(1)}\overline{\eta_n^{(2)}}\,\mathrm{d}x = 0$ that
$\|\phi_n\|^2 = |C_{n,1}|^2\|\eta_n^{(1)}\|^2+|C_{n,2}|^2\|\eta_n^{(2)}\|^2$. Hence one can deal with the two subspaces separately. Using this fact and~\eqref{eq:etabound}, the construction of large $n$ asymptotics of the coefficients $C_{n,1}$ and $C_{n,2}$ follows similarly to \cite{NS06}. In the rest of the proof, we will use the abbreviation $C_n$ for both $C_{n,1}$ and $C_{n,2}$. From Lemma~\ref{lem:C:asymptotics} we obtain for the particular cases the following asymptotics.
\begin{enumerate}
\item For $\mu^2<1$ it holds $|C_n^{\pm}|^2 \sim n^{-1\pm\frac{\mu}{\sqrt{1-\mu^2}}}$. The sequence $C_n^{+}$ is not in $\ell^2$, hence the corresponding solution does not belong to $\mathfrak{H}$. Furthermore, we use the identity~\eqref{eq:Cnidentity2} for $C_n^{-}$. Its \emph{rhs} vanishes for $N\to \infty$. All the terms on the \emph{lhs} have the same sign (note that $\mathrm{Im\,}\zeta_n(i)<0$), hence all  $C_n^{-}=0$.
\item For $\mu^2>1$ it holds $|C_n^{\pm}|^2 \sim (\sqrt{\mu^2-1}-\mu)^{\pm 2n} n^{-1} =  (\sqrt{\mu^2-1}+\mu)^{\mp 2n} n^{-1}$. Hence one solution exponentially grows and the other exponentially decays. The growing solution is not in $\ell^2$, the decaying one is zero due to the identity~\eqref{eq:Cnidentity2}.
\item For $\mu=\pm 1$ it holds $|C_n^{\pm}|^2 \sim n^{-1/2} \mathrm{e}^{\pm 2 \sqrt{2n}}$. The growing sequence is not in $\ell^2$, the decaying sequence vanishes due to~\eqref{eq:Cnidentity2}.
\end{enumerate}
Hence $V\equiv 0$ and the operator is self-adjoint.
\end{proof}

\section{Absolutely continuous spectrum}\label{sec:ac}

Having obtained the equation \eqref{eq:jacobi} having the Jacobi operator structure, we can proceed in a way similar to that of \cite{NS06, EL18} to obtain the absolutely continuous spectrum of our model Hamiltonian. The proofs of the claims made below are non-trivial, being based on the existence/nonexistence of subordinate solutions, but we are not going to present them as the argument is a straightforward adaptation of the particular cases treated in the
mentioned publications, cf. especially Theorem~3.1 and Appendix~A in \cite{NS06}.

To state the result, let us first introduce the needed operators:
\begin{definition}
Given a sequence $\{\omega_n\}_{n\in\mathbb{N}_0}$ the operators $\mathcal{D}\equiv \mathcal{D} \{\omega_n\}$ of multiplication by a this sequence and of right shift $\mathcal{S}$ mapping $\ell^2(\mathbb{N}_0) \to \ell^2(\mathbb{N}_0)$ act as
\begin{align*}
\mathcal{D} \{\omega_n\} &: \{r_0, r_1,\dots\} \mapsto \{\omega_0 r_0, \omega_1 r_1,\dots\}, \\[.15em]
\mathcal{S} &: \{r_0,r_1, \dots\} \mapsto \{0,r_0,r_1,\dots\}.
\end{align*}
Using them we define
\begin{align*}
P &:= \mathcal{D}(d_n), \quad d_n := n^{1/2}(n+\textstyle{\frac12})^{1/4}(n-\textstyle{\frac12})^{1/4}, \\[.15em]
Y &:= \mathcal{D}\{(n+\textstyle{\frac12})^{1/2}\zeta_n(\Lambda)\}, \\[.15em]
Y_0 &:= \mathcal{D}\{n+\textstyle{\frac12}\}, \\[.15em]
\mathcal{J} (\Lambda,\mu) &:= P \mathcal{S}+\mathcal{S}^* P + 2\mu Y\,. \\[.15em]
\mathcal{J}_0 (\mu) &:= P \mathcal{S}+\mathcal{S}^* P + 2\mu Y_0\,.
\end{align*}
\end{definition}

\smallskip

\noindent The Jacobi matrix corresponding to the equation~\eqref{eq:jacobi} multiplied by $(n+\textstyle{\frac12})^{1/4}$ is $\mathcal{J}(\Lambda, \mu)$. The difference $\mathcal{J}(\Lambda, \mu)-(\mathcal{J}_0(\mu)-\mu\Lambda)$ is a compact operator (see, e.g., \cite{NS06}), hence we can study the absolutely continuous spectrum of the operator $\mathcal{J}_0(\mu)$. Its spectrum is determined by the value of the parameter $\mu$ and its character changes abruptly at $\mu=1$.
\begin{theorem}\label{thm:acj0}
We have
\begin{align*}
	\sigma(\mathcal{J}_0(\mu)) = \sigma_\mathrm{ac}(\mathcal{J}_0(\mu)) &= \mathbb{R} \quad \mathrm{for}\;\; 0<\mu<1, \\[.15em]
	\sigma(\mathcal{J}_0(1)) = \sigma_\mathrm{ac}(\mathcal{J}_0(1)) &= [0,\infty), \\[.15em]
	\sigma(\mathcal{J}_0(\mu)) = \sigma_\mathrm{disc}(\mathcal{J}_0(\mu)) & \subset(0,\infty) \quad \mathrm{for}\;\; \mu>1, \\[.15em]
	\mathfrak{m}_\mathrm{ac}(E,\mathcal{J}_0(\mu)) &= 1\quad\mathrm{a.e.\ on }\;\; \sigma(\mathcal{J}_0(\mu))
\end{align*}
where $\mathfrak{m}_\mathrm{ac}$ denotes the multiplicity function of the absolutely continuous spectrum.
\end{theorem}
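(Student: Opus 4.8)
The plan is to treat $\mathcal{J}_0(\mu)$ as a half-line Jacobi matrix with off-diagonal entries $a_n:=d_n=n+O(n^{-1})$ and diagonal entries $b_n:=2\mu(n+\frac12)$, and to split the proof into the supercritical regime $\mu>1$, which I would handle purely by quadratic forms, and the regime $0<\mu\le1$, which requires the asymptotics of solutions of the associated difference equation together with subordinacy theory. Since $\sum_n a_n^{-1}\sim\sum_n n^{-1}=\infty$, Carleman's criterion shows that $\mathcal{J}_0(\mu)$ is essentially self-adjoint on finitely supported sequences, and, all $a_n$ being nonzero, the vector $\delta_0:=\{1,0,0,\dots\}$ is cyclic; hence the spectral multiplicity is everywhere at most one, so $\mathfrak{m}_\mathrm{ac}\le1$ and it equals $1$ precisely on the absolutely continuous spectrum. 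For the finer spectral type I would invoke the subordinacy theory of Gilbert--Pearson in its Jacobi-matrix form (Khan--Pearson): up to a set of Lebesgue measure zero the minimal support of the a.c.\ part of the spectral measure is $\{E\in\R:\text{the equation }a_{n+1}C_{n+1}+(b_n-E)C_n+a_nC_{n-1}=0\text{ admits no subordinate solution}\}$, while the singular part is carried by the complementary set of energies that do admit one.

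For $\mu>1$ no solution analysis is needed. Estimating the off-diagonal contribution by Lemma~\ref{lem:tech1}, $2|\mathrm{Re}(a_{n+1}\overline{r_{n+1}}r_n)|\le a_{n+1}(|r_{n+1}|^2+|r_n|^2)$, one obtains after reindexing the form bound $\langle\mathcal{J}_0(\mu)r,r\rangle\ge\sum_n c_n|r_n|^2$ with $c_n:=b_n-a_{n+1}-a_n=2(\mu-1)(n+\frac12)+O(n^{-1})$. A direct inspection gives $c_n>0$ for every $n\in\N_0$ (the only borderline value, $c_0=\mu-d_1=\mu-(3/4)^{1/4}$, is positive for $\mu>1$), so $\mathcal{J}_0(\mu)$ is bounded below by a positive constant, which already yields $\sigma(\mathcal{J}_0(\mu))\subset(0,\infty)$. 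Moreover $c_n\to+\infty$, so the diagonal operator $\mathcal{D}\{c_n\}$ has compact resolvent, and the form inequality together with the min--max principle forces the increasingly ordered eigenvalues of $\mathcal{J}_0(\mu)$ to tend to $+\infty$; hence $\mathcal{J}_0(\mu)$ has compact resolvent and $\sigma(\mathcal{J}_0(\mu))=\sigma_\mathrm{disc}(\mathcal{J}_0(\mu))$.

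For $0<\mu\le1$ I would analyse, for fixed real $E$, the equation $d_{n+1}C_{n+1}+(2\mu(n+\frac12)-E)C_n+d_nC_{n-1}=0$. Dividing the $n$-th line by $n$ and using $d_n/n=1+O(n^{-2})$ turns it into a perturbation, with coefficients of bounded variation and of size $O(n^{-1})$, of the constant-coefficient recurrence $C_{n+1}+2\mu C_n+C_{n-1}=0$; in particular $E$ enters only through the perturbation, whose characteristic roots are $t_\pm=-\mu\pm\sqrt{\mu^2-1}$. For $0<\mu<1$ one has $t_\pm=\mathrm{e}^{\pm i\arccos(-\mu)}$, distinct and of modulus $1$, and a discrete Levinson-type theorem (in its bounded-variation, not its $\ell^1$, version) produces a fundamental system of oscillatory solutions with a common power-law envelope $\sim n^{-1/2}$ (dictated by constancy of the Wronskian $a_{n+1}(C^{(1)}_{n+1}C^{(2)}_n-C^{(1)}_nC^{(2)}_{n+1})$ together with $|t_\pm|=1$); for \emph{every} $E\in\R$ neither solution is subordinate and none lies in $\ell^2$, whence $\sigma_\mathrm{sing}(\mathcal{J}_0(\mu))=\emptyset$, there is no point spectrum, and $\sigma(\mathcal{J}_0(\mu))=\sigma_\mathrm{ac}(\mathcal{J}_0(\mu))=\R$. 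For $\mu=1$ the roots coalesce to $-1$, the limiting $2\times 2$ transfer matrix is a Jordan block, and after a preliminary diagonalising substitution the recurrence becomes a discrete Bessel-type equation in the variable $t\sim2\sqrt n$ in which $E$ now enters at leading order: for $E>0$ the solutions are oscillatory with comparable envelopes (no subordinate solution, none in $\ell^2$), while for $E<0$ they split into an exponentially growing and an exponentially decaying solution. Combined with the positivity $\mathcal{J}_0(1)>0$ (the same form estimate gives $c_n>0$ for all $n$, now with $c_n\to0$ like $n^{-1}$), which rules out $\sigma(\mathcal{J}_0(1))\cap(-\infty,0)$ and hence any negative eigenvalue, this gives $\sigma(\mathcal{J}_0(1))=\sigma_\mathrm{ac}(\mathcal{J}_0(1))=[0,\infty)$.

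The main obstacle is precisely the control of the non-summable $O(n^{-1})$ corrections: the textbook discrete Levinson theorem requires an $\ell^1$ perturbation, so one must use its refinement for perturbations of bounded variation, and at the critical value $\mu=1$ one must first conjugate away the Jordan-block degeneracy of the limiting transfer matrix before any Levinson-type result applies at all; this is also the point where the subordinate/non-subordinate dichotomy is most delicate. That analysis, which we do not reproduce here, is carried out in Theorem~3.1 and Appendix~A of \cite{NS06}, and, exactly as in the $\delta$- and $\delta'$-cases treated in \cite{EL18}, only routine modifications are needed for the present four-parameter coupling.
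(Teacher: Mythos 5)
Your proposal is consistent with the paper's treatment: the paper gives no proof of this theorem at all, stating only that the claims follow from the existence/nonexistence of subordinate solutions by a straightforward adaptation of Theorem~3.1 and Appendix~A of \cite{NS06}, which is precisely the subordinacy-plus-solution-asymptotics route you outline and to which you defer the delicate $O(n^{-1})$ (bounded-variation) and critical $\mu=1$ analysis at exactly the same point. The elementary ingredients you add — Carleman essential self-adjointness, cyclicity of $\delta_0$ giving simple spectrum, and the form/min--max estimate $c_n=2\mu(n+\tfrac12)-d_n-d_{n+1}>0$ yielding positivity and discreteness for $\mu>1$ — are correct and compatible with the cited analysis.
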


\smallskip

\noindent Repeating the reasoning of \cite[Sec.~6--8]{NS06} one can check that the spectrum of our Hamiltonian is the union of spectra of the `free' operator and that of the above Jacobi operator:
\begin{theorem}\label{thm:ach}
The spectrum of $\mathbf{H}_{0,0,0}$ is purely absolutely continuous, $\sigma(\mathbf{H}_{0,0,0}) = \left[\frac{1}{2},\infty\right)$. For the `full' operator we have
\begin{align*}
	\sigma_{\mathrm{ac}}(\mathbf{H}_{\alpha,\beta,\gamma} ) &= \sigma_\mathrm{ac}(\mathbf{H}_{0,0,0})
\\
&\cup \sigma_\mathrm{ac}\Big(\mathcal{J}_0\Big(\frac{2\sqrt{2}\beta}{4+\alpha\beta+|\gamma|^2-\sqrt{(\alpha\beta+|\gamma|^2-4)^2+16|\gamma|^2}}\Big)\Big)
\\
&\cup \,\sigma_\mathrm{ac}\Big(\mathcal{J}_0\Big(\frac{2\sqrt{2}\beta}{4+\alpha\beta+|\gamma|^2+\sqrt{(\alpha\beta+|\gamma|^2-4)^2+16|\gamma|^2}}\Big)\Big)\,,\quad \beta\ne 0, \\[.15em]
	\mathfrak{m}_\mathrm{ac}(E,\mathbf{H}_{\alpha,\beta,\gamma}) &= \mathfrak{m}_\mathrm{ac}(E,\mathbf{H}_{0,0,0})
\\
&+ \mathfrak{m}_\mathrm{ac}\Big(E,\mathcal{J}_0\Big(\frac{2\sqrt{2}\beta}{4+\alpha\beta+|\gamma|^2-\sqrt{(\alpha\beta+|\gamma|^2-4)^2+16|\gamma|^2}}\Big)\Big)
\\
&+ \mathfrak{m}_\mathrm{ac}\Big(E,\mathcal{J}_0\Big(\frac{2\sqrt{2}\beta}{4+\alpha\beta+|\gamma|^2+\sqrt{(\alpha\beta+|\gamma|^2-4)^2+16|\gamma|^2}}\Big)\Big) \,,\quad \beta\ne 0, \\[.15em]
	\sigma_{\mathrm{ac}}(\mathbf{H}_{\alpha,0,\gamma} ) &= \sigma_\mathrm{ac}(\mathbf{H}_{0,0,0})\cup \sigma_\mathrm{ac}\Big(\mathcal{J}_0\Big(\frac{4+|\gamma|^2}{2\sqrt{2}\alpha}\Big)\Big), \\[.15em]
	\mathfrak{m}_\mathrm{ac}(E,\mathbf{H}_{\alpha,0,\gamma}) &= \mathfrak{m}_\mathrm{ac}(E,\mathbf{H}_{0,0,0}) + \mathfrak{m}_\mathrm{ac}\Big(E,\mathcal{J}_0\Big(\frac{4+|\gamma|^2}{2\sqrt{2}\alpha}\Big)\Big) \,.
\end{align*}
\end{theorem}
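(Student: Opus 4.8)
The plan is to handle the two assertions separately. For $\mathbf{H}_{0,0,0}$ I would note that putting $\alpha=\beta=\gamma=0$ in \eqref{eq:coupl1}--\eqref{eq:coupl2} (equivalently in \eqref{eq:beta=0coup1&2}) forces continuity of $\Psi$ and of $\partial_x\Psi$ across $x=0$, so $\mathbf{H}_{0,0,0}$ is the free operator acting as $-\partial_x^2$ in the longitudinal variable plus the harmonic oscillator $\frac12(-\partial_y^2+y^2)$ in the transverse one, with no interaction at all. Expanding in the Hermite basis turns it into $\bigoplus_{n\in\mathbb{N}_0}\big(-\partial_x^2+(n+\frac12)\big)$ on $\ell^2(\mathbb{N}_0)\otimes L^2(\mathbb{R})$, a direct sum of one-dimensional operators each purely absolutely continuous with spectrum $[n+\frac12,\infty)$ of multiplicity two; hence $\mathbf{H}_{0,0,0}$ is purely a.c.\ and $\sigma(\mathbf{H}_{0,0,0})=\bigcup_{n\in\mathbb{N}_0}[n+\frac12,\infty)=[\frac12,\infty)$.

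For the `full' operator the strategy is to run, channel by channel, the argument of \cite[Sec.~6--8]{NS06} and of its counterparts for the $\delta$- and $\delta'$-couplings in \cite{Sol04b,EL18}, the only new input being the decoupling already carried out in Section~\ref{sec:jacobi}. In the identification $L^2(\mathbb{R}^2)\cong\ell^2(\mathbb{N}_0)\otimes L^2(\mathbb{R})$ furnished by \eqref{eq:separation} the Hamiltonian acts on the sequence $\{\psi_n\}$ as $\psi_n\mapsto-\psi_n''+(n+\frac12)\psi_n$, so that all the mixing of the components sits in the matching conditions \eqref{eq:coupl5&6}; the change of boundary data \eqref{eq:jacobic+-1&2} recasts those into two decoupled scalar three-term recurrences (a single one when $\beta=0$) of Jacobi type, whose matrices, after the diagonal rescaling by $(n+\frac12)^{1/4}$, are $\mathcal{J}(\Lambda,\mu_1)$ and $\mathcal{J}(\Lambda,\mu_2)$ (respectively $\mathcal{J}(\Lambda,\mu)$ with $\mu=\frac{4+|\gamma|^2}{2\sqrt2\alpha}$).

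Next I would express $(\mathbf{H}_{\alpha,\beta,\gamma}-\Lambda)^{-1}$, $\im\Lambda\ne0$, through a Krein-type formula relative to the reference operator defined by the Dirichlet conditions $\psi_n(0\pm)=0$: the reference resolvent is the direct sum of free half-line resolvents and is responsible for the a.c.\ branch $[\frac12,\infty)$, while the correction is built from the decaying solutions $\mathrm{e}^{\mp\zeta_n(\Lambda)x}$ of \eqref{eq-un} whose coefficients, through the decoupling above, solve $\mathcal{J}(\Lambda,\mu_j)\vec C^{(j)}=(\text{source data})$, so the correction carries the resolvents of $\mathcal{J}(\Lambda,\mu_1),\mathcal{J}(\Lambda,\mu_2)$ at the spectral value $0$. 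A limiting-absorption analysis resting on the existence/nonexistence of subordinate solutions of these recurrences---precisely \cite[Thm.~3.1 and App.~A]{NS06}, applied to each channel---then shows that channel $j$ contributes to $\sigma_{\mathrm{ac}}(\mathbf{H}_{\alpha,\beta,\gamma})$ exactly the set $\{\Lambda:0\in\sigma_{\mathrm{ac}}(\mathcal{J}(\Lambda,\mu_j))\}$, with a.c.\ multiplicity $1$ there by the last line of Theorem~\ref{thm:acj0}, and, the corresponding spectral subspaces being mutually orthogonal, the contributions add. Finally, $\mathcal{J}(\Lambda,\mu)-(\mathcal{J}_0(\mu)-\mu\Lambda)$ is diagonal with entries $O(1/n)$, hence compact, and a perturbation of this size leaves the subordinacy structure at $0$ of the recurrence unchanged, so the set above equals $\frac{1}{\mu_j}\sigma_{\mathrm{ac}}(\mathcal{J}_0(\mu_j))$; by Theorem~\ref{thm:acj0} that a.c.\ spectrum is always $\mathbb{R}$, a closed half-line through the origin, or $\emptyset$, all invariant under multiplication by a nonzero real, so it coincides with $\sigma_{\mathrm{ac}}(\mathcal{J}_0(\mu_j))$ and we obtain the stated union together with the additivity of the multiplicities; the $\beta=0$ case is the same with one channel.

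The real difficulty is not the bookkeeping above but the analytic input borrowed from \cite{NS06}: the limiting-absorption principle for $(\mathbf{H}_{\alpha,\beta,\gamma}-\Lambda)^{-1}$ and, underneath it, the large-$n$ asymptotic analysis of subordinate solutions of the three-term recurrence, together with the control of the compact defect $\mathcal{J}(\Lambda,\mu)-(\mathcal{J}_0(\mu)-\mu\Lambda)$---which is exactly why the excerpt states that these (non-trivial) proofs are a straightforward adaptation and are not reproduced here. The borderline values $\mu_j=\pm1$, and the degenerate case $\gamma=0$, $\alpha\beta=4$ in which $\mu_1=\mu_2$ and the two channels are supported on opposite half-lines, would have to be inspected separately, but they do not alter the generic conclusion.
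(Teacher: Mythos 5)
Your argument is essentially the paper's own: the paper offers no proof of this theorem beyond combining the channel decomposition with parameters $\mu_1,\mu_2$ from Section~\ref{sec:jacobi}, the compactness of $\mathcal{J}(\Lambda,\mu)-(\mathcal{J}_0(\mu)-\mu\Lambda)$, Theorem~\ref{thm:acj0}, and the subordination analysis borrowed from \cite[Sec.~6--8]{NS06}, which is exactly the route you sketch (in somewhat more detail, e.g.\ the treatment of $\mathbf{H}_{0,0,0}$ and the Krein-type resolvent step). The only loose point is your claim that $[0,\infty)$ is invariant under multiplication by a nonzero real, which fails for negative scalars and is relevant precisely at the borderline $\mu_j=-1$ --- a case you already flag for separate inspection and which does not affect the generic conclusion.
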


\noindent The point where the expressions \eqref{eq:mu_1&2} equal to one, or where $\frac{4+|\gamma|^2}{2\sqrt{2}\alpha}=1$ if $\beta=0$, characterize the manifold in the parameter space at which the spectral transition occurs.

\section{Discrete spectrum}\label{sec:dicrete}

We know from the particular cases of the $\delta$ and $\delta'$ coupling that in the situation naturally called \emph{subcritical}, corresponding here to $\mu>1$, the discrete spectrum is nonempty whenever the interaction is `switched on'. Let us look at it now in more detail. We denote by $N_+(\lambda, \mathbf{A}):=\mathrm{dim\,}E^{\mathbf{A}}(\lambda,\infty)$ and  $N_-(\lambda, \mathbf{A}):=\mathrm{dim\,}E^{\mathbf{A}}(-\infty,\lambda)$ the dimension of the spectral projection on the intervals $(\lambda,\infty)$ and $(-\infty,\lambda)$, respectively; the symbol $E^{\mathbf{A}}(\cdot)$ denotes here the spectral measure of the operator $\mathbf{A}$. These quantities are introduced in the subcritical situation ($\mu>1$) only.

To state the results we again need an auxiliary notion. By $\mathbf{J}(\varepsilon)$ we denote the Jacobi operator in $\ell^2(\mathbb{N}_0)$ such that the only nonzero entries in its matrix representation are
$$
  j_{n,n-1}(\varepsilon) = j_{n-1,n}(\varepsilon) = \frac{n^{1/2}}{2(n+\varepsilon)^{1/4}(n-1+\varepsilon)^{1/4}}\,,\quad n\in\mathbb{N}.
$$
We will also employ another Jacobi operator in $\ell^2(\mathbb{N}_0)$, denoted as $\mathbf{J}_0$, not to be mixed up with $\mathcal{J}_0(\mu)$ used above, with the non-zero entries
$$
  j_{n,n-1} = j_{n-1,n} = \frac{1}{2(1-n^{-1})^{1/4}}\,,\quad n\in\mathbb{N}\backslash \{1\}\,.
$$
Then we have the following result:
\begin{theorem}\label{thm:discrete1}
Let the numbers \eqref{eq:mu_1&2} satisfy $\mu_1>1$ or $\mu_2>1$, then
$$
  N_-\Big(\frac{1}{2}-\varepsilon,\mathbf{H}_{\alpha,\beta,\gamma}\Big) = N_1(\varepsilon)+N_2(\varepsilon)
$$
holds for $\beta\ne 0$, where
$$ 
  N_j(\varepsilon) := N_+(\mu_j,\mathbf{J}(\varepsilon)) = N_-(-\mu_j,\mathbf{J}(\varepsilon))\,, \quad j=1,2\,.
$$ 
Furthermore, if $\beta=0$ and $\frac{4+|\gamma|^2}{2\sqrt{2}\alpha}>1$, we have
$$
  N_-\Big(\frac{1}{2}-\varepsilon,\mathbf{H}_{\alpha,0,\gamma}\Big) =  N_+\Big(\frac{4+|\gamma|^2}{2\sqrt{2}\alpha},\mathbf{J}(\varepsilon)\Big) = N_-\Big(-\frac{4+|\gamma|^2}{2\sqrt{2}\alpha},\mathbf{J}(\varepsilon)\Big)\,.
$$
\end{theorem}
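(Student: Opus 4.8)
The plan is to adapt the scheme of \cite[Sec.~6--8]{NS06} and \cite{EL18}, reducing the count of eigenvalues of $\mathbf{H}_{\alpha,\beta,\gamma}$ lying below the bottom $\frac12$ of its essential spectrum to a spectral counting problem for the auxiliary operator $\mathbf{J}(\varepsilon)$. \emph{Step 1 (reduction to the recurrence).} For $\Lambda<\frac12$, separating variables as in \eqref{eq:separation}, a vector $\Psi=\sum_n\psi_n\chi_n$ from $\mathrm{dom}\,\mathbf{H}_{\alpha,\beta,\gamma}$ satisfies $\mathbf{H}_{\alpha,\beta,\gamma}\Psi=\Lambda\Psi$ if and only if each $\psi_n$ is the unique $H^2((-\infty,0))\oplus H^2((0,\infty))$ solution of \eqref{eq-un} that decays at $\pm\infty$, i.e. $\psi_n=\phi_n(\cdot,\Lambda)$ of \eqref{eq:phi1&2}, and the coefficients obey \eqref{eq:coupl5&6}. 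In the variables $C_{n,1},C_{n,2}$ of \eqref{eq:jacobic+-1&2} the latter split, for $\beta\ne0$, into two copies of \eqref{eq:jacobi}, one with $\mu=\mu_1$ and one with $\mu=\mu_2$ from \eqref{eq:mu_1&2} (a single copy with $\mu=\frac{4+|\gamma|^2}{2\sqrt2\,\alpha}$ when $\beta=0$), and by \eqref{eq:etabound} one has $\Psi\in\mathfrak H$ iff $\{C_{n,1}\},\{C_{n,2}\}\in\ell^2(\mathbb{N}_0)$. Hence the multiplicity of $\Lambda$ as an eigenvalue of $\mathbf{H}_{\alpha,\beta,\gamma}$ equals the total number, over the channels, of linearly independent $\ell^2$ solutions of \eqref{eq:jacobi}.

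\emph{Step 2 (identification with $\mathbf{J}(\varepsilon)$).} Multiplying \eqref{eq:jacobi} by $(n+\frac12)^{1/4}$ recasts it as $\mathcal{J}(\Lambda,\mu)C=0$. Putting $\Lambda=\frac12-\varepsilon$ with $\varepsilon>0$ we have $\zeta_n(\Lambda)^2=n+\varepsilon$, so $Y=\mathcal{D}\{(n+\frac12)^{1/2}(n+\varepsilon)^{1/2}\}$ is positive; since $Y>0$, the equation $\mathcal{J}(\frac12-\varepsilon,\mu)C=0$ is equivalent to $Y^{-1/2}\bigl(P\mathcal{S}+\mathcal{S}^{*}P\bigr)Y^{-1/2}\tilde C=-2\mu\,\tilde C$ with $\tilde C:=Y^{1/2}C$, and a direct entrywise computation gives $Y^{-1/2}\bigl(P\mathcal{S}+\mathcal{S}^{*}P\bigr)Y^{-1/2}=2\,\mathbf{J}(\varepsilon)$. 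Thus \eqref{eq:jacobi} admits a nontrivial $\ell^2$ solution exactly when $-\mu\in\sigma_\mathrm{p}(\mathbf{J}(\varepsilon))$ --- the equivalence $C\in\ell^2\Leftrightarrow\tilde C\in\ell^2$ being guaranteed in the subcritical regime $\mu^2>1$ by the exponential decay of the recessive solution from Lemma~\ref{lem:C:asymptotics} --- and, conjugating $\mathbf{J}(\varepsilon)$ by $\mathcal{D}\{(-1)^n\}$ to flip its sign, equivalently when $\mu\in\sigma_\mathrm{p}(\mathbf{J}(\varepsilon))$; this also gives $N_+(\mu_j,\mathbf{J}(\varepsilon))=N_-(-\mu_j,\mathbf{J}(\varepsilon))$. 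Since $j_{n,n-1}(\varepsilon)\to\frac12$ we have $\sigma_\mathrm{ess}(\mathbf{J}(\varepsilon))=[-1,1]$, so for $\mu_j>1$ the numbers $N_\pm(\pm\mu_j,\mathbf{J}(\varepsilon))$ count isolated eigenvalues of finite total multiplicity; the ``opposite-sign'' subcritical case $\mu_j<-1$ is brought to this one by the spectral symmetry $(\alpha,\beta)\to(-\alpha,-\beta)$ of Section~\ref{sec:model}.

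\emph{Step 3 (from ``eigenvalue at $\frac12-\varepsilon$'' to ``number of eigenvalues below $\frac12-\varepsilon$'').} Here I would invoke the Birman--Schwinger principle as in \cite[Sec.~6--8]{NS06}: up to the diagonal rescaling of Step 2, $\mathbf{J}(\varepsilon)$ is the Birman--Schwinger operator of $\mathbf{H}_{\alpha,\beta,\gamma}$ at the energy $\frac12-\varepsilon$, with $\mu_j$ the coupling constant in the $j$-th channel. Combined with the strict monotonicity of the free resolvent $(\mathbf{H}_{0,0,0}-\Lambda)^{-1}$ in $\Lambda<\frac12$ --- equivalently, with the monotone dependence on $\varepsilon$ of the eigenvalues of $\mathbf{J}(\varepsilon)$ lying outside $[-1,1]$, together with the bound $\|\mathbf{J}(\varepsilon)\|\le1$ for $\varepsilon\ge1$, which leaves no eigenvalue outside $[-1,1]$ for large $\varepsilon$ --- a standard sweeping argument identifies the number of eigenvalues of $\mathbf{H}_{\alpha,\beta,\gamma}$ below $\frac12-\varepsilon$ in the $j$-th channel with $N_-(-\mu_j,\mathbf{J}(\varepsilon))=N_j(\varepsilon)$. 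Summing the channels for $\beta\ne0$, resp. taking the single channel for $\beta=0$, yields the stated formulae.

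The main obstacle is Step 3: making the Birman--Schwinger bookkeeping rigorous for the singular boundary-type interaction, and controlling the eigenvalue flow of $\mathbf{J}(\varepsilon)$ as $\varepsilon$ varies --- in particular preventing eigenvalues from escaping to infinity as $\varepsilon\to0+$ (where $j_{1,0}(\varepsilon)$ diverges) and ruling out any loss of eigenvalues into the edges $\pm1$ of the essential spectrum, where the comparison with $\mathbf{J}_0$ is useful. The remaining steps are routine adaptations of \cite{NS06,EL18}.
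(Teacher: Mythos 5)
Your Steps 1--2 are sound (the entrywise identity $Y^{-1/2}(P\mathcal{S}+\mathcal{S}^{*}P)Y^{-1/2}=2\mathbf{J}(\varepsilon)$ at $\Lambda=\frac12-\varepsilon$ is correct, and the sign flip via conjugation with $\mathcal{D}\{(-1)^n\}$ gives $N_+(\mu_j,\mathbf{J}(\varepsilon))=N_-(-\mu_j,\mathbf{J}(\varepsilon))$), but they only establish a pointwise correspondence: $\frac12-\varepsilon$ is an eigenvalue of $\mathbf{H}_{\alpha,\beta,\gamma}$ iff $-\mu_j\in\sigma_{\mathrm p}(\mathbf{J}(\varepsilon))$ for some $j$. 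The theorem is a \emph{counting} identity at fixed $\varepsilon$, and everything that converts the pointwise statement into the count is deferred to your Step 3, which you yourself flag as the main obstacle and do not prove. That step is not a routine invocation of Birman--Schwinger here: $\mathbf{J}(\varepsilon)$ is an off-diagonal Jacobi matrix whose quadratic form $2\sum_n j_{n,n-1}(\varepsilon)\,\mathrm{Re}(\bar g_n g_{n-1})$ is not sign-definite, so the fact that the entries decrease in $\varepsilon$ does not, via min--max, give monotone motion of the eigenvalues lying above $\mu_j$; one would also have to match multiplicities, rule out eigenvalues entering or leaving through the edge of $[-1,1]$ or escaping as $\varepsilon\to0+$, and verify continuity of the eigenvalue branches --- none of which is supplied. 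As written, the proposal proves the characterization of eigenvalues but not the asserted equality $N_-(\frac12-\varepsilon,\mathbf{H}_{\alpha,\beta,\gamma})=N_1(\varepsilon)+N_2(\varepsilon)$.

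For comparison, the paper avoids any eigenvalue-flow argument and works variationally at fixed $\varepsilon$, following Solomyak: by the min-max principle $N_-(\frac12-\varepsilon,\mathbf{H}|_{D_j})$ is the maximal dimension of a subspace on which $\mathbf{a}_{\alpha,\beta,\gamma}[\Psi]-(\frac12-\varepsilon)\|\Psi\|^2=\|\Psi\|_\varepsilon^2+\frac1\beta(\mathbf{b}_1+\mathbf{b}_2+\mathbf{b}_3)[\Psi]<0$; one then projects onto the trial subspaces $\tilde D_j(\varepsilon)$ built from the saturating exponentials $\tilde\psi^{(j)}_{\sqrt{n+\varepsilon}}$ of Lemma~\ref{lem:tech2}, shows that this projection does not increase $\|\Psi\|_\varepsilon^2$ while leaving the boundary form \eqref{eq:pauli} unchanged (the matrix identity with $\Sigma$ on the eigenvector $\begin{pmatrix}K_+ & K_-\end{pmatrix}^T$), and identifies the restricted form with $\big((I+\mu_j^{-1}\mathbf{J}(\varepsilon))g,g\big)_{\ell^2}$, $g=\{P_n\}$. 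This yields $N_-(\frac12-\varepsilon,\mathbf{H}|_{D_j})=N_-(-\mu_j,\mathbf{J}(\varepsilon))$ directly, with no sweeping in $\varepsilon$ needed. If you want to complete your route you must actually construct the monotone Birman--Schwinger family and control the eigenvalue flow; otherwise the variational reduction via the saturating functions is the missing idea.
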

\begin{proof}
The argument follows closely \cite[Thm. 3.1]{Sol04b} and \cite[Thm. 10]{EL18}. We will give the proof for $\beta \ne 0$, the procedure for $\beta = 0$ is similar. We remind that the subspaces $D_j\subset D$, $j=1,2$, defined in Section~\ref{sec:quadratic}, are the subspaces of functions for which the Jacobi equation holds with $\mu_j,\:j=1,2$, cf. the formul\ae\ \eqref{eq:jacobic+-1}--\eqref{eq:jacobic+-2}. By the variational principle,
$$
   N_-\big(\textstyle{\frac{1}{2}}-\varepsilon,\left.\mathbf{H}_{\alpha,\beta,\gamma}\right|_{D_j}\big) = \mathop{\mathrm{max}}_{\mathcal{F}\in\mathfrak{F}_j(\varepsilon)}\,\mathrm{dim\,}\mathcal{F},
$$
where $\mathfrak{F}_j(\varepsilon)$ is the family of all subspaces $\mathcal{F}\subset D_j$ such that
\begin{equation}
  \mathbf{a}_{\alpha,\beta,\gamma}[\Psi] - \big(\textstyle{\frac{1}{2}}-\varepsilon\big)\|\Psi\|^2_{L^2(\mathbb{R}^2)}<0 \label{eq:condspec}
\end{equation}
We define
$$
  \|\Psi\|_\varepsilon^2 := \sum_{n\in \mathbb{N}_0}\int_{\mathbb{R}}\left(|\psi_n'(x)|^2+(n+\varepsilon)|\psi_n|^2\right)\,\mathrm{d}x\,,
$$
where $\psi_n(x)$ are the coefficient functions of $\Psi$ in the decomposition \eqref{eq:separation}. As argued in \cite{EL18}, the norm $\|\Psi\|_\varepsilon = \sqrt{\|\Psi\|_\varepsilon^2}$ satisfies the parallelogram law, hence it is induced by an inner product, namely
$$
 (\Phi,\Psi)_\varepsilon := \sum_{n\in \mathbb{N}_0}\int_{\mathbb{R}}\left(\psi_n'(x) \bar\phi_n'(x)+(n+\varepsilon)\psi_n(x) \bar\phi_n(x)\right)\,\mathrm{d}x.
$$
Next we define the subspaces $\tilde D_j(\varepsilon) \subset D_j$, $j=1,2$ consisting of the functions  $\tilde \Psi$ which are projections of $\Psi$ with the components $P_n \tilde\psi^{(j)}_{\sqrt{n+\varepsilon}}$ and the coefficient sequence $\{P_n\}_{n=0}^\infty \in \ell^2(\mathbb{N}_0)$, where $\tilde\psi^{(j)}_{\sqrt{n+\varepsilon}}$ are the functions defined in \eqref{eq:psitilde}. In the further text, we drop the subscript/supperscript $j$, where it is not necessary. One can check easily that the functions $P_n\tilde \psi_{\sqrt{n+\varepsilon}}$ are normalized in such a way that $\|\tilde \Psi\|_\varepsilon = \|\{P_n\}_{n=0}^\infty\|_{\ell^2}$, and that
$$
  P_n := \int_{\mathbb{R}}\left(\psi_n'(x) \bar{\tilde\psi}'_{\sqrt{n+\varepsilon}}(x)+(n+\varepsilon)\psi_n(x) \bar{\tilde \psi}_{\sqrt{n+\varepsilon}}(x)\right)\,\mathrm{d}x\,.
$$
Integrating by parts, one finds
\begin{align}
  P_n &= \psi_n(0-)\bar{\tilde \psi}_{\sqrt{n+\varepsilon}}'(0-)- \psi_n(0+)\bar{\tilde \psi}_{\sqrt{n+\varepsilon}}'(0+)\nonumber
\\
   & = (\psi_n(0+) \bar K_+ + \psi_n(0-)\bar K_-)\frac{(n+\varepsilon)^{1/4}}{\sqrt{|K_+|^2+|K_-|^2}}\,.\label{eq:pn}
\end{align}
The expression on the left-hand side of condition \eqref{eq:condspec} can be rewritten as
\begin{equation}
 \mathbf{a}_{\alpha,\beta,\gamma}[\Psi] - \left(\frac{1}{2}-\varepsilon\right)\|\Psi\|^2_{L^2(\mathbb{R}^2)}  =   \|\Psi\|_\varepsilon^2 +\frac{1}{\beta}(\mathbf{b}_1[\Psi]+\mathbf{b}_2[\Psi]+\mathbf{b}_3[\Psi])\label{eq:psiepsineq}
\end{equation}
To deal with the second term on the right-hand side, we use formula \eqref{eq:pauli}. Let us assume that the functions $\Psi$, $\tilde\Psi_\varepsilon$ belong to a given $D_j$, $\tilde D_j(\varepsilon)$ $j=1,2\,$; we want to check that the mentioned expression will not change if we substitute $\tilde\Psi_\varepsilon = \{P_n \tilde \psi_{\sqrt{n+\varepsilon}}\}$ instead of $\Psi = \{\psi_n\}$. Using the values of $\tilde \psi_{\sqrt{n+\varepsilon}}(0+)$, $\tilde \psi_{\sqrt{n+\varepsilon}}(0-)$ from~\eqref{eq:psitilde} and $P_n$ from \eqref{eq:pn} we have
\begin{align*}
  P_n \begin{pmatrix}\tilde \psi_{\sqrt{n+\varepsilon}}(0+)\\\tilde \psi_{\sqrt{n+\varepsilon}}(0-)\end{pmatrix} & = (\psi_n(0+)\bar K_+ + \psi_n(0-)\bar K_- ) \frac{(n+\varepsilon)^{1/4}}{\sqrt{|K_+|^2+|K_-|^2}} \begin{pmatrix}K_+\\ K_-\end{pmatrix}
\\
& \quad\times \frac{1}{(n+\varepsilon)^{1/4}\sqrt{|K_+|^2+|K_-|^2}}
\\[.3em]
 & = \frac{1}{|K_+|^2+|K_-|^2}\begin{pmatrix}K_+ & 0 \\ 0 & K_-\end{pmatrix} \cdot \begin{pmatrix}1 & 1 \\ 1 & 1\end{pmatrix} \cdot \begin{pmatrix}\bar K_+ & 0 \\ 0 & \bar K_-\end{pmatrix} \cdot \begin{pmatrix}\psi_n(0+)\\ \psi_n(0-)\end{pmatrix}\,,
\end{align*}
where `$\cdot$' denotes matrix multiplication. We have used
\begin{align*}
  (\psi_n(0+)\bar K_+ + \psi_n(0-)\bar K_- ) &= \begin{pmatrix}1 & 1\end{pmatrix}\cdot \begin{pmatrix}\bar K_+ & 0 \\ 0 & \bar K_-\end{pmatrix}\cdot \begin{pmatrix}\psi_n(0+)\\ \psi_n(0-)\end{pmatrix}\,,\\
 \begin{pmatrix}K_+\\ K_-\end{pmatrix} &= \begin{pmatrix} K_+ & 0 \\ 0 & K_-\end{pmatrix}\begin{pmatrix}1\\1\end{pmatrix}\,.
\end{align*}
Both the coefficient functions $\psi_n$ and $\psi_{n-1}$ belong to $D_j$, hence we can write
$$
  \begin{pmatrix}\psi_n(0+)\\\psi_n(0-)\end{pmatrix} = Q_n \begin{pmatrix}K_+\\ K_-\end{pmatrix}\,,
$$
where $Q_n$ is a normalization constant, the value of which is not important. Defining
$$
  M := \frac{1}{|K_+|^2+|K_-|^2}\begin{pmatrix}K_+ & 0 \\ 0 & K_-\end{pmatrix}  \begin{pmatrix}1 & 1 \\ 1 & 1\end{pmatrix}  \begin{pmatrix}\bar K_+ & 0 \\ 0 & \bar K_-\end{pmatrix}
$$
and writing $M^\dagger$ for the Hermitian conjugate of this matrix, the equation we have to check in order to justify the claim we made about the second term in \eqref{eq:psiepsineq} simplifies to
$$
  |Q_n|^2 \begin{pmatrix}K_+\\ K_-\end{pmatrix}^\dagger M^\dagger \Sigma M \begin{pmatrix}K_+\\ K_-\end{pmatrix} = |Q_n|^2 \begin{pmatrix}K_+\\ K_-\end{pmatrix}^\dagger \Sigma \begin{pmatrix}K_+\\ K_-\end{pmatrix}
$$
which is not difficult to verify, for instance, using \emph{Mathematica}.

In the rest of the proof, one can proceed in a way similar to that of \cite{EL18,Sol04b}. If we replace $\Psi$ by $\tilde \Psi_\varepsilon$ in \eqref{eq:psiepsineq}, the first term does not increase, because $\tilde \Psi_\varepsilon$ is a projection of the function $\Psi$. The second term is not affected by the change, as argued previously, thus the inequality \eqref{eq:condspec} holds true for $\tilde \Psi_\varepsilon$ as well.

Consider two subspaces $\mathcal{F}, \mathcal{F}' \in \mathfrak{F}$ such that $\mathcal{F}\subset \mathcal{F}'$ and $\mathcal{F}\in \tilde D_j(\varepsilon)$. Should there be an element $\Psi\in D_j$ orthogonal to $\tilde D_j(\varepsilon)$, then we would have $P_n = 0$ for all $n\in \mathbb{N}$ which in turn implies $\mathbf{b}_1[\Psi]+\mathbf{b}_2[\Psi]+\mathbf{b}_3[\Psi] = 0$. In such a case, however, the inequality \eqref{eq:condspec} would not hold, which is a contradiction. Consequently,
$$
  N_-\left(\frac{1}{2}-\varepsilon, \left.\mathbf{H}_{\alpha,\beta,\gamma}\right|_{D_j}\right) = \mathrm{max}_{\mathcal{F}\in\mathfrak{F}_j(\varepsilon),\, \mathcal{F}\in \tilde D_j(\varepsilon)}\: \mathrm{dim}\mathcal{F}\,.
$$
For each $\tilde \Psi_\varepsilon \in \tilde D(\varepsilon)$ we have
\begin{align*}
\|\tilde \Psi_\varepsilon\|_\varepsilon^2 +& \frac{1}{\beta} (\mathbf{b}_1[\Psi]+\mathbf{b}_2[\Psi]+\mathbf{b}_3[\Psi]) \\[.3em] = &\sum_{n\in \mathbb{N}_0} |P_n|^2
+\sum_{n\in \mathbb{N}}\frac{\sqrt{n}}{2\sqrt{2}}\:\mathrm{Re}\left[\begin{pmatrix}\psi_n(0+)\\\psi_n(0-)\end{pmatrix}^\dagger \Sigma \begin{pmatrix}\psi_{n-1}(0+)\\\psi_{n-1}(0-)\end{pmatrix}\right]
\end{align*}
Using next the relation
$$
  P_n = \begin{pmatrix}\bar K_+ & \bar K_-\end{pmatrix}\begin{pmatrix}\psi_n(0+)\\ \psi_n(0-)\end{pmatrix}\frac{(n+\varepsilon)^{1/4}}{\sqrt{|K_+|^2+|K_-|^2}}
$$
which follows from \eqref{eq:pn}, the fact that the matrix $\Sigma$ can be diagonalized using the unitary matrix composed of its eigenvectors with the eigenvalues $2\sqrt{2}\beta \mu_j^{-1}$, and the fact that we have restricted ourselves to the subspace $\tilde D_j (\varepsilon)$ corresponding to one of its eigenspaces, we can rewrite the above expression as
\begin{align*}
  \|\tilde \Psi_\varepsilon\|_\varepsilon^2 + \frac{1}{\beta} (\mathbf{b}_1[\Psi]+\mathbf{b}_2[\Psi]+\mathbf{b}_3[\Psi]) & = \sum_{n\in \mathbb{N}_0} |P_n|^2 + 2\mu_j^{-1}\sum_{n\in\mathbb{N}}j_{n,n-1}(\varepsilon)\,\mathrm{Re\,}(\bar P_n P_{n-1})
  \\
  & = \|g\|^2_{\ell^2(\mathbb{N}_0)}+\mu_j^{-1}(\mathbf{J}(\varepsilon)g,g)_{\ell^2(\mathbb{N}_0)}
 \\
 & = (I+\mu_j^{-1}\mathbf{J}(\varepsilon)g,g),
\end{align*}
where $g = \{P_n\}\in\ell^2(\mathbb{N}_0)$. The claim of the theorem now follows from the decomposition of the operator $\mathbf{H}_{\alpha, \beta, \gamma}$ to the subspaces $D_j$ in combination with the spectrum of $\mathbf{J}(\varepsilon)$.
\end{proof}

Next we proceed in analogy with \cite[Theorem 3.2]{Sol04b} and \cite[Theorem 11]{EL18}. One cannot apply the previous theorem directly to the case $\varepsilon=0$, because $j_{1,0} = \infty$. Restricting the quadratic form to the subspace of $g = {P_n}$ with $P_0 = 0$, however, the limit $\varepsilon \to 0$ leads to the operator $\mathbf{J}_0$ introduced in the opening of this section. The price we pay for this restriction is the possible change of the number of eigenvalues by one. This leads to the first main result about the discrete spectrum.
\begin{theorem}\label{thm:discrete2}
Let $\beta\ne 0$, $\,\mu_1>1$ and $\mu_2\leq 1$. Then \mbox{$\big|N_-(\frac{1}{2},\mathbf{H}_{\alpha,\beta,\gamma}) - N_+(\mu_1,\mathbf{J}_0)\big|\le 1$;} the analogous relation holds if the roles of $\mu_1,\,\mu_2$ are interchanged. If
both $\mu_1, \mu_2 >1$, we have $\big|N_-(\frac{1}{2},\mathbf{H}_{\alpha,\beta,\gamma}) - N_+(\mu_1,\mathbf{J}_0) - N_+(\mu_2,\mathbf{J}_0)\big|\le 2$. On the other hand, for $\beta=0$ and $\frac{4+|\gamma|^2}{2\sqrt{2}\alpha}>1$ the inequality $\big|N_-(\frac{1}{2},\mathbf{H}_{\alpha, 0,\gamma}) - N_+\big(\frac{4+|\gamma|^2}{2\sqrt{2}\alpha},\mathbf{J}_0\big)\big|\le 1$ holds.
\end{theorem}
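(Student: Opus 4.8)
The plan is to pass to the limit $\varepsilon\to 0+$ in Theorem~\ref{thm:discrete1}, proceeding as in \cite[Theorem~3.2]{Sol04b} and \cite[Theorem~11]{EL18}. First I would record that the spectral projections $E^{\mathbf{H}_{\alpha,\beta,\gamma}}((-\infty,\frac{1}{2}-\varepsilon))$ increase, as $\varepsilon\to 0+$, to $E^{\mathbf{H}_{\alpha,\beta,\gamma}}((-\infty,\frac{1}{2}))$, so that $N_-(\frac{1}{2},\mathbf{H}_{\alpha,\beta,\gamma})=\lim_{\varepsilon\to 0+}N_-(\frac{1}{2}-\varepsilon,\mathbf{H}_{\alpha,\beta,\gamma})$; by Theorem~\ref{thm:discrete1} the task is thereby reduced to computing $\lim_{\varepsilon\to 0+}N_+(\mu_j,\mathbf{J}(\varepsilon))$ on each subspace $D_j$ on which the model is subcritical, i.e., for which the pertinent parameter exceeds $1$.

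The obstruction, already flagged before the statement, is that $\mathbf{J}(\varepsilon)$ has no bounded limit as $\varepsilon\to 0+$, since $j_{1,0}(\varepsilon)=\frac{1}{2(1+\varepsilon)^{1/4}\varepsilon^{1/4}}\to\infty$; this entry couples the coordinates $P_0$ and $P_1$ and diverges. To get around it I would invoke the form identity from the proof of Theorem~\ref{thm:discrete1}, by which the relevant count is the number of negative directions of $I+\mu_j^{-1}\mathbf{J}(\varepsilon)$ on vectors $g=\{P_n\}$, and restrict this form to the codimension-one subspace $\{g:P_0=0\}$. There the form is governed by the compression of $I+\mu_j^{-1}\mathbf{J}(\varepsilon)$, i.e., by the tridiagonal matrix with zero diagonal and off-diagonal entries $j_{n,n-1}(\varepsilon)$ with $n\ge 2$ only. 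These are bounded uniformly for $\varepsilon\in(0,1]$ and satisfy $\sup_{n\ge 2}|j_{n,n-1}(\varepsilon)-\frac{1}{2(1-n^{-1})^{1/4}}|=O(\varepsilon)$, with limits exactly the entries of $\mathbf{J}_0$; hence the compressed operators converge to $\mathbf{J}_0$ in operator norm. Since $\mathbf{J}_0$ and all these compressions share the essential spectrum $[-1,1]$, for any fixed $\mu>1$ the number of their eigenvalues exceeding $\mu$ stabilises as $\varepsilon\to 0+$, so the number of eigenvalues of this compressed operator that exceed $\mu_j$ tends to $N_+(\mu_j,\mathbf{J}_0)$.

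It then remains to account for the restriction. By the min-max principle, deleting one coordinate from a quadratic form changes its number of negative directions by at most one, uniformly in $\varepsilon$; letting $\varepsilon\to 0+$ and combining with the previous step gives $|N_-(\frac{1}{2},\mathbf{H}_{\alpha,\beta,\gamma}|_{D_j})-N_+(\mu_j,\mathbf{J}_0)|\le 1$ for each subcritical subspace $D_j$. Summing over the subspaces furnished by Theorem~\ref{thm:discrete1} yields the bound $\le 1$ when exactly one of $\mu_1,\mu_2$ exceeds $1$ and $\le 2$ when both do; the case $\beta=0$ is identical, with $\frac{4+|\gamma|^2}{2\sqrt{2}\alpha}$ in the role of $\mu_j$ on the single relevant subspace.

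The step I expect to be the main obstacle is the bookkeeping around $P_0$: one must verify that passing to $\{P_0=0\}$ costs at most one negative direction, that the divergence of $j_{1,0}(\varepsilon)$ does not inflate this defect, and that the defect is incurred separately on $D_1$ and on $D_2$, which is precisely why the two-subcritical-subspace estimate carries the constant $2$ rather than $1$. A secondary technical point is to upgrade the entrywise convergence of the compressed matrices to operator-norm convergence, so that the eigenvalue count above $\mu_j$ genuinely survives the limit; this hinges on the uniform tail decay $j_{n,n-1}(\varepsilon)-\frac{1}{2}=O(n^{-1})$ of the off-diagonal coefficients.
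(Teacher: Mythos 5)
Your proposal follows essentially the same route as the paper: pass to the limit $\varepsilon\to 0+$ in Theorem~\ref{thm:discrete1}, remove the divergent entry $j_{1,0}(\varepsilon)$ by restricting the form to the subspace $\{P_0=0\}$, identify the limiting operator with $\mathbf{J}_0$, and accept a defect of at most one eigenvalue per restriction, hence at most two when both $\mu_1,\mu_2>1$. The paper only sketches this (referring to \cite[Thm.~3.2]{Sol04b} and \cite[Thm.~11]{EL18}), so your added detail on the norm convergence of the compressed matrices and the codimension-one min-max argument is consistent with, and fills in, the intended proof.
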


Finally, let us look how the cardinality of the discrete spectrum behaves when the coupling parameters approach the critical values.
\begin{theorem}\label{thm:discrete3}
We have the following asymptotic formul{\ae}:
\begin{enumerate}
\item[(i)]
If $0< \beta \leq 2 \sqrt{2}$ and $\alpha \to \frac{1}{\beta}\left[\frac{|\gamma|^2}{2\sqrt{2}(2\sqrt{2}-\beta)}-(|\gamma|^2-4)-\sqrt{2}(2\sqrt{2}-\beta)\right]$ as the left limit, then $\mu_1 \to 1+$ and $N_-(\frac{1}{2},\mathbf{H}_{\alpha, \beta,\gamma})\sim \frac{1}{4\sqrt{2}\sqrt{\mu_1-1}}$.
\item[(ii)]
If $ \beta \geq 2 \sqrt{2}$ and $\alpha \to \frac{1}{\beta}\left[\frac{|\gamma|^2}{2\sqrt{2}(2\sqrt{2}-\beta)}-(|\gamma|^2-4)-\sqrt{2}(2\sqrt{2}-\beta)\right]$ as the left limit, then $\mu_2 \to 1+$ and $N_-(\frac{1}{2},\mathbf{H}_{\alpha, \beta,\gamma})\sim \frac{1}{4\sqrt{2}\sqrt{\mu_2-1}}$.
\item[(iii)]
If $\gamma = 0$ and $\alpha \beta =4$, $\beta\geq 2\sqrt{2}$, then $\mu_1 = \mu_2$ and $N_-(\frac{1}{2},\mathbf{H}_{\alpha, \beta,\gamma})\sim \frac{1}{2\sqrt{2}\sqrt{\frac{\beta}{2\sqrt{2}}-1}}$ in the limit $\beta \to 2\sqrt{2}+$.
\item[(iv)] If $\beta = 0$ then $N_-(\frac{1}{2},\mathbf{H}_{\alpha, 0,\gamma})\sim \frac{2^{1/4}}{4}\sqrt{\frac{\alpha}{4+|\gamma|^2-2\sqrt{2}\alpha}}$ holds as $\frac{4+|\gamma|^2}{2\sqrt{2}\alpha}\to 1+$.
\end{enumerate}
\end{theorem}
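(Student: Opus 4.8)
The plan is to transfer the counting problem from the Hamiltonian to the fixed Jacobi operator $\mathbf{J}_0$ by means of Theorem~\ref{thm:discrete2}, and then to read off the leading asymptotics of $N_+(\,\cdot\,,\mathbf{J}_0)$ near the upper edge of its essential spectrum. In each of the regimes (i)--(iv) exactly one of the parameters $\mu_1$, $\mu_2$, or (for $\beta=0$) $\frac{4+|\gamma|^2}{2\sqrt2\alpha}$ — in the degenerate case (iii) the coincident pair $\mu_1=\mu_2$ — tends to $1$ from above, the other channel, when present, staying strictly on the non-critical side. By Theorem~\ref{thm:discrete2} (and its $\beta=0$ variant) the quantity $N_-(\frac{1}{2},\mathbf{H}_{\alpha,\beta,\gamma})$ then differs by at most an absolute constant from $N_+(\mu,\mathbf{J}_0)$ — or from $N_+(\mu_1,\mathbf{J}_0)+N_+(\mu_2,\mathbf{J}_0)=2N_+(\mu_1,\mathbf{J}_0)$ in the degenerate case, the $\OO(1)$ contribution of the non-critical channel being absorbed as well. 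Since the divergent term grows without bound in the relevant limit, a bounded discrepancy is negligible at leading order, so it suffices to determine $N_+(\mu,\mathbf{J}_0)$ as $\mu\to1+$.

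For this last point the decisive remark is that $\mathbf{J}_0$ is precisely the Jacobi operator which already governed the discrete spectrum in the $\delta$- and $\delta'$-versions of the model; it is a Hilbert--Schmidt perturbation of the free Jacobi operator with constant off-diagonal entries $\frac{1}{2}$, so $\sigma_{\mathrm{ess}}(\mathbf{J}_0)=[-1,1]$ and $N_+(\mu,\mathbf{J}_0)<\infty$ for $\mu>1$. Thus the asymptotics
$$
  N_+(\mu,\mathbf{J}_0)=\frac{1}{4\sqrt2\,\sqrt{\mu-1}}\,\bigl(1+\oo(1)\bigr)\qquad \mathrm{as}\ \mu\to1+
$$
may be quoted verbatim from \cite[Thm.~3.2]{Sol04b} and \cite[Thm.~11]{EL18}; its proof there uses a Birman--Schwinger-type reduction combined with an oscillation/WKB analysis of the associated three-term recurrence, in which the slow, logarithmic variation of the matrix entries at infinity produces the inverse-square-root divergence and the constant $\frac{1}{4\sqrt2}$ comes out of the corresponding semiclassical integral. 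I would invoke this result rather than redo the argument, since $\mathbf{J}_0$ is literally the same operator.

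The remaining task is to express the critical limits through the coupling parameters and substitute. Setting $\mu_1=1$ in \eqref{eq:mu_1} and solving for $\alpha$ gives, after a single squaring, the value displayed in (i); notably the equations $\mu_1=1$ and $\mu_2=1$ share the same squared form, so (i) and (ii) are governed by one and the same critical relation, the distinction being whether $\beta<2\sqrt2$ (in which case the sign constraint on the extracted square root selects the $\mu_1$ branch) or $\beta>2\sqrt2$ (selecting $\mu_2$), which explains both the identical expressions in (i) and (ii) and the role of the threshold $\beta=2\sqrt2$. The monotonicity of $\mu_1$ (respectively $\mu_2$) as a function of $\alpha$ on the pertinent range, already exploited after \eqref{eq:mu_2}, shows that approaching the critical value as the left limit in $\alpha$ drives the relevant $\mu$ to $1+$; combining this with the displayed asymptotics yields (i) and (ii). In the degenerate case (iii) the constraints $\gamma=0$, $\alpha\beta=4$ force $\mu_1=\mu_2=\beta/(2\sqrt2)$, both channels being subcritical for $\beta>2\sqrt2$, so the two contributions add and double the leading term, producing $N_-(\frac{1}{2},\mathbf{H}_{\alpha,\beta,\gamma})\sim\frac{1}{2\sqrt2\sqrt{\beta/(2\sqrt2)-1}}$ as $\beta\to2\sqrt2+$. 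For $\beta=0$ one has $\mu=\frac{4+|\gamma|^2}{2\sqrt2\alpha}$, hence $\mu-1=\frac{4+|\gamma|^2-2\sqrt2\alpha}{2\sqrt2\alpha}$, and inserting this into the asymptotics above reproduces (iv).

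The steps are individually routine, and the analytic heart of the matter — the threshold asymptotics of $N_+(\,\cdot\,,\mathbf{J}_0)$ — is imported unchanged from the earlier papers. I expect the only delicate point to be the bookkeeping of the two-channel decomposition: in each case one must correctly identify which of $\mu_1,\mu_2$ produces the divergence, verify that the remaining channel contributes only $\OO(1)$ to $N_-(\frac{1}{2},\mathbf{H}_{\alpha,\beta,\gamma})$, and handle the coalescence $\mu_1=\mu_2$ of the degenerate case, all while keeping careful track of the direction of the one-sided limits.
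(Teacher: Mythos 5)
Your proposal follows essentially the same route as the paper: verify that the relevant parameter ($\mu_1$, $\mu_2$, their coincident value $\beta/(2\sqrt2)$ in the degenerate case, or $\frac{4+|\gamma|^2}{2\sqrt2\alpha}$ for $\beta=0$) tends to $1+$ at the critical coupling, reduce $N_-(\frac12,\mathbf{H}_{\alpha,\beta,\gamma})$ to $N_+(\,\cdot\,,\mathbf{J}_0)$ up to a bounded error via Theorem~\ref{thm:discrete2}, and import Solomyak's threshold asymptotics $N_+(\mu,\mathbf{J}_0)\sim\frac{1}{4\sqrt2\sqrt{\mu-1}}$, with the doubling in case (iii) and the substitution $\mu-1=\frac{4+|\gamma|^2-2\sqrt2\alpha}{2\sqrt2\alpha}$ in case (iv) handled exactly as in the paper. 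Your explicit bookkeeping of the non-critical channel's $\OO(1)$ contribution and of the branch selection at $\beta=2\sqrt2$ is a slightly more detailed rendering of what the paper summarizes as a direct check, so the argument is correct and equivalent.
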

\begin{proof}
One can check directly that for a nonzero $\beta$ satisfying the assumption we have
$$
\mu_{1,2}^{-1} = \frac{\alpha \beta+|\gamma|^2-4}{2\sqrt{2}\beta}\mp \frac{\sqrt{(\alpha \beta+|\gamma|^2-4)^2+16|\gamma|^2}}{2\sqrt{2}\beta}+\frac{2\sqrt{2}}{\beta} \to 1-
$$
in the limit
$$
\alpha \to \frac{1}{\beta}\left[\frac{|\gamma|^2}{2\sqrt{2}(2\sqrt{2}-\beta)}-(|\gamma|^2-4)-\sqrt{2}(2\sqrt{2}-\beta)\right]-
$$
and to proceed as in \cite[Thm~13]{EL18}  and \cite[eq. (3.10)]{Sol04b}. Specifically, by \cite[Thm~3.3]{Sol04b}, the Jacobi operator $\mathbf{J}_0$ spectrum satisfies  $N_+(\mu, \mathbf{J}_0) \sim \frac{1}{4\sqrt{2}\sqrt{\mu-1}}$ as $\mu \to 1+$; by Theorem~\ref{thm:discrete2} we then obtain the claims (i) and (ii). The relation $\mu_1=\mu_2$ in claim (iii) can be easily verified under the condition $\gamma = 0$ and $\alpha \beta =4$, the rest follows from Theorem~\ref{thm:discrete2}. Finally, claim (iv) follows from the last part of Theorem~\ref{thm:discrete2}.
\end{proof}

\bigskip

\subsection*{Acknowledgements}

P.E. was supported by the Czech Science Foundation within the project 21-07129S, J.L. by Czech Science Foundation within the project 22-18739S. The authors thank the reviewer for useful remarks which improved the paper.

\appendix
\section{Appendix: Asymptotics of solutions of the Jacobi equation} \label{sec:appa}
In this appendix, we give technical results on large $n$ asymptotics of the solutions of the equation~\eqref{eq:jacobi}. First, we give without the proof the result by Birkhoff and Adams~\cite[Theorem 8.36]{Ela99} on the asymptotics of a more general system. A misprint was corrected according to \cite{NS06}.

\begin{theorem}\label{thm:BirAd}
Let the Jacobi equation
\begin{equation}
  C(n+1)+p_1(n)C(n)+p_2(n)C(n-1) = 0 \label{eq:jacgen}
\end{equation}
have the following large $n$ expansion of the functions $p_1$ and $p_2$
$$
  p_1(n) = \sum_{j=0}^\infty a_j n^{-j}, \quad p_2(n) = \sum_{j=0}^\infty b_j n^{-j}, \quad b_0 \ne 0\,.
$$
Let $\lambda_{\pm}$ be the roots of the quadratic equation $\lambda^2+a_0\lambda+b_0=0$.
Then the system~\eqref{eq:jacgen} has two linearly independent solutions $C^{\pm}(n)$ with the following large $n$ asymptotics.
\begin{enumerate}
\item For $\lambda_+ \ne \lambda_-$ it holds
$$
  C^{\pm}(n) \sim \lambda_{\pm}^n n ^{d_{\pm}}\,,\quad d_{\pm} = \frac{a_1\lambda_{\pm}+b_1}{a_0\lambda_{\pm}+2b_0}\,.
$$
\item For $\lambda_+ = \lambda_- = \lambda$ it holds
$$
  C^{\pm}(n) \sim \lambda^n \mathrm{e}^{\pm \delta \sqrt{n}} n^\kappa\,,\quad \delta =2\sqrt{\frac{a_0a_1-2b_1}{2b_0}}\,,\quad \kappa = \frac{1}{4}+\frac{b_1}{2b_0}\,.
$$
\end{enumerate}
\end{theorem}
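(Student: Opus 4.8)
Although the statement is only invoked here as a classical result, for the reader's convenience we sketch the route one would take to prove it; it is the Birkhoff--Adams asymptotic theorem for second-order linear difference equations (see \cite[Ch.~8]{Ela99}). The plan is to pass from \eqref{eq:jacgen} to the companion first-order system with transfer matrix
$$
  A(n) := \begin{pmatrix} -p_1(n) & -p_2(n) \\ 1 & 0 \end{pmatrix} = A_0 + \frac{A_1}{n} + O(n^{-2})\,, \qquad A_0 = \begin{pmatrix} -a_0 & -b_0 \\ 1 & 0 \end{pmatrix}\,,
$$
whose leading matrix $A_0$ has characteristic polynomial $\lambda^2 + a_0\lambda + b_0$, hence eigenvalues $\lambda_\pm$, with $\lambda_+\lambda_- = b_0 \ne 0$. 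One then seeks formal gauge transformations $T(n) = T_0\bigl(I + n^{-1}T_1 + n^{-2}T_2 + \cdots\bigr)$ rendering the transformed transfer matrix $T(n+1)^{-1}A(n)T(n)$ diagonal modulo an absolutely summable remainder, after which a discrete Levinson-type dichotomy theorem promotes the formal solutions to genuine ones carrying exactly the stated asymptotics.

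In the non-degenerate case $\lambda_+\ne\lambda_-$ one diagonalizes $A_0 = T_0\,\mathrm{diag}(\lambda_+,\lambda_-)\,T_0^{-1}$; the step-$j$ equation for $T_j$ is solvable precisely because $X\mapsto A_0X - XA_0$ is invertible on off-diagonal matrices when $\lambda_+\ne\lambda_-$. Truncating after one step gives $T(n+1)^{-1}A(n)T(n) = \mathrm{diag}\bigl(\lambda_+(1 + d_+/n),\,\lambda_-(1 + d_-/n)\bigr) + O(n^{-2})$, and comparing the $1/n$ terms forces $d_\pm = (a_1\lambda_\pm + b_1)/(a_0\lambda_\pm + 2b_0)$, the denominator equalling $-\lambda_\pm(\lambda_\pm - \lambda_\mp)\ne 0$. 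Since $\prod_{k=1}^n(1 + d_\pm/k)\sim n^{d_\pm}$ and the $O(n^{-2})$ tail is summable, the Levinson theorem yields two independent solutions $C^\pm(n)\sim \lambda_\pm^n n^{d_\pm}$.

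The degenerate case $\lambda_+ = \lambda_- =: \lambda$ is the substantive one, since $A_0$ is then a single Jordan block and the above diagonalization is impossible. Here one first substitutes $C(n) = \lambda^n D(n)$, which turns \eqref{eq:jacgen} into
$$
  \lambda^2\bigl(D(n+1) - 2D(n) + D(n-1)\bigr) + \frac{\lambda a_1 D(n) + b_1 D(n-1)}{n} = O(n^{-2})\,,
$$
the discrete counterpart of $D'' + (c/n)D = 0$ with $c = -\delta^2/4$, whose solutions are of Bessel type, $\sim n^{1/4}e^{\pm\delta\sqrt n}$. A Liouville--Green substitution $D(n) = e^{\pm\delta\sqrt n}\,n^{1/4}\,E(n)$ then desingularizes this turning point: the resulting equation for $E$ carries a diagonal $O(n^{-1})$ term integrating to the extra factor $n^{b_1/(2b_0)}$ together with an off-diagonal remainder of order $O(n^{-3/2})$, hence summable, so the Levinson theorem applies once more and delivers $C^\pm(n)\sim \lambda^n e^{\pm\delta\sqrt n}\,n^\kappa$ with $\kappa = \tfrac14 + b_1/(2b_0)$.

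The main obstacle lies entirely in this confluent case: one must verify that the $e^{\pm\delta\sqrt n}$ substitution genuinely lowers the residual perturbation from the non-summable size $O(n^{-1})$ to a summable $O(n^{-3/2})$, and that the half-integer powers of $n$ it introduces do not destroy the dichotomy condition required by the discrete Levinson theorem. The remaining ingredients — the formal recursions for the $T_j$, the identification of $d_\pm$, $\delta$ and $\kappa$ by matching the $1/n$ coefficients, and the final invocation of the Levinson theorem — are routine bookkeeping.
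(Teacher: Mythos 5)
The paper does not prove this statement at all: it is quoted verbatim (with a misprint corrected following \cite{NS06}) as the classical Birkhoff--Adams theorem, citing \cite[Theorem 8.36]{Ela99}, so there is no internal argument to compare against. Your sketch is a faithful roadmap of the standard modern proof: passing to the companion system with transfer matrix $A(n)=A_0+A_1/n+O(n^{-2})$, whose leading part has characteristic polynomial $\lambda^2+a_0\lambda+b_0$, performing diagonalizing gauge transformations, and invoking a discrete Levinson (Benzaid--Lutz) theorem; your identifications check out, e.g.\ $a_0\lambda_\pm+2b_0=-\lambda_\pm(\lambda_\pm-\lambda_\mp)$ so the exponent $d_\pm$ is well defined precisely when the roots are distinct, and in the confluent case $(a_0a_1-2b_1)/(2b_0)=-(\lambda a_1+b_1)/\lambda^2$ matches the Bessel-type continuum model you write down. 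You also correctly single out the genuine difficulty, namely that in the double-root case the Liouville--Green substitution must reduce the residual perturbation to a summable size and that the $\mathrm{e}^{\pm\delta\sqrt n}$ factors must not spoil the dichotomy condition; note that the dichotomy issue is also relevant in the ``generic'' case when $|\lambda_+|=|\lambda_-|$ with $\lambda_+\ne\lambda_-$ (exactly the situation $\mu^2<1$ used in Lemma~\ref{lem:C:asymptotics}, where the roots are complex conjugates of modulus one), though there the ratio of diagonal products behaves like a power of $n$ and the condition is satisfied. As a proof it is only a sketch --- the $T_j$ recursions, the post-substitution error estimates, and the Levinson hypotheses are asserted rather than verified --- but as a reconstruction of the cited result it is sound, and differs from the paper only in that the paper deliberately leaves the proof to the literature, whereas the original Birkhoff--Adams route proceeded via formal series solutions of irregular difference equations rather than the Levinson-type dichotomy argument you outline.
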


We use the previous theorem for finding the asymptotics of the solutions of the equation~\eqref{eq:jacobi}. We slightly generalize \cite[Lemma 3.3]{NS06}, allowing $\mu$ to be negative.

\begin{lemma}\label{lem:C:asymptotics}
Let $\mu\in \mathbb{R}$ and $\Lambda \in \mathbb{C}\backslash (\frac{1}{2},\infty)$. Then the system \eqref{eq:jacobi} has two linearly independent solutions $C^{\pm}_n$ with the following large $n$ asymptotics
\begin{enumerate}
\item For $\mu^2<1$ it holds $C^{\pm}_n\sim (\mu + i \sqrt{1-\mu^2})^{\pm n}\, n^{-\frac{1}{2}\mp \frac{i \Lambda \mu}{2\sqrt{1-\mu^2}}}$.
\item For $\mu^2>1$ it holds $C^{\pm}_n\sim (-\mu +\sqrt{\mu^2-1})^{\pm n}\, n^{-\frac{1}{2}\pm \frac{\Lambda \mu}{2\sqrt{\mu^2-1}}}$.
\item For $\mu=1$ it holds $C^{\pm}_n\sim (-1)^n\mathrm{e}^{\pm 2\sqrt{-\Lambda n}}\, n^{-1/4}$.
\item For $\mu=-1$ it holds $C^{\pm}_n\sim \mathrm{e}^{\pm 2\sqrt{-\Lambda n}}\, n^{-1/4}$.
\end{enumerate}
\end{lemma}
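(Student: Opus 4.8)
The natural route is to cast the recurrence~\eqref{eq:jacobi} into the normal form~\eqref{eq:jacgen} demanded by Theorem~\ref{thm:BirAd} and then simply read off the asymptotics. Dividing~\eqref{eq:jacobi} by the coefficient $(n+1)^{1/2}(n+\frac32)^{1/4}$ of $C_{n+1}$ puts it in the form~\eqref{eq:jacgen} for $C(n)=C_n$ with
\[
  p_1(n)=\frac{2\mu\,(n+\frac12)^{1/4}\,\zeta_n(\Lambda)}{(n+1)^{1/2}(n+\frac32)^{1/4}},\qquad
  p_2(n)=\frac{n^{1/2}(n-\frac12)^{1/4}}{(n+1)^{1/2}(n+\frac32)^{1/4}},
\]
both of which admit asymptotic expansions in powers of $n^{-1}$ as $n\to\infty$.

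First I would extract the first two coefficients of those expansions. Using the branch of $\zeta_n(\Lambda)=\sqrt{n+\frac12-\Lambda}$ fixed below~\eqref{eq-un} (for which $\mathrm{Re}\,\zeta_n(\Lambda)>0$, so that $\zeta_n(\Lambda)=\sqrt{n}\,\big(1+\frac{1/2-\Lambda}{2n}+O(n^{-2})\big)$) together with the binomial expansions of the remaining factors, one obtains
\[
  p_1(n)=2\mu-\mu(1+\Lambda)\,n^{-1}+O(n^{-2}),\qquad p_2(n)=1-n^{-1}+O(n^{-2}),
\]
that is, in the notation of Theorem~\ref{thm:BirAd}, $a_0=2\mu$, $a_1=-\mu(1+\Lambda)$, $b_0=1\neq 0$ and $b_1=-1$. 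The characteristic equation $\lambda^2+2\mu\lambda+1=0$ then has roots $\lambda_\pm=-\mu\pm\sqrt{\mu^2-1}$, which are simple exactly when $\mu^2\neq1$ and collapse to the double root $\lambda=-\mu=\mp1$ when $\mu=\pm1$; this is precisely the split into the four cases of the lemma.

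It then remains to invoke the two parts of Theorem~\ref{thm:BirAd}. For $\mu^2\neq1$, part~(1) gives $C_n^\pm\sim\lambda_\pm^n n^{d_\pm}$ with $d_\pm=(a_1\lambda_\pm+b_1)/(a_0\lambda_\pm+2b_0)$; using $\lambda_+\lambda_-=1$ and $a_0\lambda_\pm+2b_0=\mp2\sqrt{\mu^2-1}\,\lambda_\pm$ this reduces to $d_\pm=-\frac12\pm\frac{\Lambda\mu}{2\sqrt{\mu^2-1}}$, which is cases~(1) and~(2) of the lemma once $\sqrt{\mu^2-1}$ is written as $i\sqrt{1-\mu^2}$ in the subcritical regime $\mu^2<1$. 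For $\mu=\pm1$, part~(2) applies with $\lambda=-\mu$: here $a_0a_1-2b_1=-2\Lambda$ for both sign choices, whence $\delta=2\sqrt{-\Lambda}$ and $\kappa=\frac14+\frac{b_1}{2b_0}=-\frac14$, giving $C_n^\pm\sim(-\mu)^n\mathrm{e}^{\pm2\sqrt{-\Lambda n}}\,n^{-1/4}$, i.e.\ cases~(3) and~(4).

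\textbf{Main obstacle.} Conceptually nothing here is hard --- the whole argument is an application of Theorem~\ref{thm:BirAd}, and extending \cite[Lemma~3.3]{NS06} to $\mu<0$ is free, since that theorem assumes no positivity of its coefficients. The only points needing care are (a) confirming that the prescribed branch of $\zeta_n(\Lambda)$ is the one asymptotic to $+\sqrt{n}$, so that $p_1$ has leading coefficient $2\mu$ rather than $-2\mu$, and (b) the sign and branch bookkeeping when simplifying $d_\pm$ and when reconciling the roots $\lambda_\pm=-\mu\pm\sqrt{\mu^2-1}$ with the closed forms displayed in the lemma; I expect (b) to be the fiddliest, though entirely mechanical, step.
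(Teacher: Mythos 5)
Your proposal is correct and follows essentially the same route as the paper: both reduce \eqref{eq:jacobi} to the normal form \eqref{eq:jacgen} with $a_0=2\mu$, $a_1=-\mu(1+\Lambda)$, $b_0=1$, $b_1=-1$ and then read off the four cases from Theorem~\ref{thm:BirAd}, the roots being $\lambda_\pm=-\mu\pm\sqrt{\mu^2-1}$. Your simplification $d_\pm=-\tfrac12\pm\frac{\Lambda\mu}{2\sqrt{\mu^2-1}}$ agrees with the paper's expression for $d_\pm$, so no gap remains.
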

\begin{proof}
The results follows directly from Theorem~\ref{thm:BirAd}. The system~\eqref{eq:jacobi} is a particular case of the system~\eqref{eq:jacgen} with
$$
  a_0 = 2\mu\,, \quad a_1 = -\mu(1+\Lambda)\,,\quad b_0 =1, \quad b_1 = -1\,.
$$
Using these values we obtain the quadratic equation $\lambda^2+2\mu \lambda +1 = 0$ with the roots $\lambda_{\pm} = -\mu \pm \sqrt{\mu^2-1}$. In the first two cases, we use the first case of Theorem~\ref{thm:BirAd} with $d_{\pm} = -\frac{1}{2}-\frac{\Lambda}{2}\frac{\mu(-\mu \pm \sqrt{\mu^2-1})}{\mu(-\mu \pm \sqrt{\mu^2-1})+1}$. In the latter two cases we apply the second case of Theorem~\ref{thm:BirAd} since $\lambda_{\pm} = -1$ for $\mu =1$ and $\lambda_{\pm} = 1$ for $\mu =-1$.
\end{proof}

The following lemma is taken from \cite[eq. (3.20)]{NS06}.

\begin{lemma}\label{lem:Cnidentity}
The solutions $C_n$ of the equation
\begin{equation}
  Q_{n+1}C_{n+1}+ P_n C_n+Q_n C_{n-1}=0\,,\quad n\in \mathbb{N}_0\label{eq:Cnidentity}
\end{equation}
with $Q_n\in\mathbb{R}$, $Q_0 = 0$ satisfy the identity
$$
  \sum_{n=0}^N |C_n|^2\,\mathrm{Im\,}P_n = -Q_{N+1}\,\mathrm{Im}(C_{N+1}\overline{C_{N}})\,.
$$
\end{lemma}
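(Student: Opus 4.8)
The plan is to prove this by a discrete telescoping argument, which is the finite-difference analogue of the fact that the Wronskian of two solutions of a second-order equation is constant. First I would multiply the three-term recurrence \eqref{eq:Cnidentity} by $\overline{C_n}$, obtaining
$$
  Q_{n+1}C_{n+1}\overline{C_n}+P_n|C_n|^2+Q_n C_{n-1}\overline{C_n}=0,
$$
and then take the imaginary part of both sides. Since by hypothesis $Q_n\in\mathbb{R}$, the first and third terms carry real coefficients, so the only imaginary contribution coming from the middle term is $|C_n|^2\,\mathrm{Im}\,P_n$, and one arrives at
$$
  Q_{n+1}\,\mathrm{Im}(C_{n+1}\overline{C_n})+|C_n|^2\,\mathrm{Im}\,P_n+Q_n\,\mathrm{Im}(C_{n-1}\overline{C_n})=0.
$$

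Next I would introduce the shorthand $a_n:=Q_{n+1}\,\mathrm{Im}(C_{n+1}\overline{C_n})$ and use the elementary identity $\mathrm{Im}(C_{n-1}\overline{C_n})=-\mathrm{Im}(C_n\overline{C_{n-1}})$, which turns the third term above into $-a_{n-1}$. The displayed relation then becomes the exact telescoping identity $|C_n|^2\,\mathrm{Im}\,P_n=a_{n-1}-a_n$. Summing over $n=0,1,\dots,N$ collapses the right-hand side to $a_{-1}-a_N$, and I would finish by noting that the boundary term $a_{-1}=Q_0\,\mathrm{Im}(C_0\overline{C_{-1}})$ vanishes because $Q_0=0$, so the undefined quantity $C_{-1}$ never actually enters. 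This yields $\sum_{n=0}^N|C_n|^2\,\mathrm{Im}\,P_n=-a_N=-Q_{N+1}\,\mathrm{Im}(C_{N+1}\overline{C_N})$, exactly the asserted identity.

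I do not anticipate any genuine obstacle here: each step is one line of elementary algebra. The only point that requires a moment of care is the bookkeeping of the $n=0$ term in the sum, where one must observe that the hypothesis $Q_0=0$ is precisely what annihilates the otherwise ill-defined boundary contribution involving $C_{-1}$; this is the reason that assumption appears in the statement of the lemma.
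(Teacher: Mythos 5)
Your proof is correct and follows exactly the paper's argument: multiply the recurrence by $\overline{C_n}$, take imaginary parts (using $Q_n\in\mathbb{R}$), and sum from $0$ to $N$, with the telescoping killed at $n=0$ by $Q_0=0$. You have merely written out in detail the telescoping step that the paper leaves implicit.
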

\begin{proof}
The identity follows from multiplying~\eqref{eq:Cnidentity} by $\overline{C_n}$, taking the imaginary part and taking the sum of $n$ from 0 to $N$.
\end{proof}

Its corollary is also inspired by \cite{NS06}, however, its form is slightly different than in the mentioned paper.

\begin{corollary}\label{cor:Cnidentity}
Solutions of the equation~\eqref{eq:jacobi} satisfy
\begin{equation}
 2\mu \sum_{n=0}^N |C_n|^2 (n+1/2)^{1/2}\,\mathrm{Im\,}\zeta_{n}(\Lambda) = - (N+1)^{1/2}\left((N+1)^2-\frac{1}{4}\right)^{1/4}\,\mathrm{Im\,}(C_{N+1}\overline{C_N})\,.\label{eq:Cnidentity2}
\end{equation}
\end{corollary}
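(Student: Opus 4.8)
The plan is to reduce the statement to Lemma~\ref{lem:Cnidentity} by a harmless rescaling of the recurrence~\eqref{eq:jacobi}. First I would multiply~\eqref{eq:jacobi} by the factor $(n+\frac{1}{2})^{1/4}$, which is nonzero for every $n\in\mathbb{N}_0$ and hence does not change the set of solutions. The rescaled recurrence reads
\[
  (n+1)^{1/2}(n+\tfrac{3}{2})^{1/4}(n+\tfrac{1}{2})^{1/4}\,C_{n+1}+2\mu\,(n+\tfrac{1}{2})^{1/2}\zeta_n(\Lambda)\,C_n+n^{1/2}(n-\tfrac{1}{2})^{1/4}(n+\tfrac{1}{2})^{1/4}\,C_{n-1}=0 ,
\]
which has exactly the form~\eqref{eq:Cnidentity} with $Q_n=n^{1/2}(n+\frac{1}{2})^{1/4}(n-\frac{1}{2})^{1/4}$ (the sequence $d_n$ entering the definition of $P$) and $P_n=2\mu\,(n+\frac{1}{2})^{1/2}\zeta_n(\Lambda)$.

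Next I would verify the hypotheses of Lemma~\ref{lem:Cnidentity}. The crucial point, and the only reason for the rescaling, is that after multiplying by $(n+\frac{1}{2})^{1/4}$ the two off-diagonal coefficients match under the index shift: the coefficient of $C_{n+1}$ in the $n$-th equation equals $Q_{n+1}=(n+1)^{1/2}(n+\frac{3}{2})^{1/4}(n+\frac{1}{2})^{1/4}$, while the coefficient of $C_{n-1}$ in the same equation is $Q_n$, so the recurrence is symmetric in the sense required. Moreover $Q_n\in\mathbb{R}$ for all $n$ (it does not involve $\Lambda$) and $Q_0=0$ because of the prefactor $n^{1/2}$, consistently with the fact that the $n=0$ instance of~\eqref{eq:jacobi} carries no $C_{-1}$ term. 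Hence Lemma~\ref{lem:Cnidentity} applies and yields
\[
  \sum_{n=0}^N|C_n|^2\,\mathrm{Im\,}P_n=-Q_{N+1}\,\mathrm{Im\,}(C_{N+1}\overline{C_N}).
\]

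It then remains to unfold the two shorthands. Since $2\mu\,(n+\frac{1}{2})^{1/2}$ is real, $\mathrm{Im\,}P_n=2\mu\,(n+\frac{1}{2})^{1/2}\,\mathrm{Im\,}\zeta_n(\Lambda)$; and since $(N+\frac{1}{2})(N+\frac{3}{2})=(N+1)^2-\frac14$, we have $Q_{N+1}=(N+1)^{1/2}\big((N+1)^2-\tfrac14\big)^{1/4}$. Substituting these into the displayed identity produces exactly~\eqref{eq:Cnidentity2}. I do not expect any genuine obstacle here: the argument is essentially bookkeeping of index shifts, and the single step one must not omit is the rescaling by $(n+\frac{1}{2})^{1/4}$, without which the lower coefficient $n^{1/2}(n-\frac{1}{2})^{1/4}$ of~\eqref{eq:jacobi} would not coincide with its shifted upper counterpart and Lemma~\ref{lem:Cnidentity} could not be invoked.
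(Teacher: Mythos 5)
Your argument is correct and coincides with the paper's own proof: the paper likewise multiplies \eqref{eq:jacobi} by $(n+\tfrac12)^{1/4}$ and applies Lemma~\ref{lem:Cnidentity} with $Q_n=n^{1/2}(n^2-\tfrac14)^{1/4}$ (identical to your $n^{1/2}(n+\tfrac12)^{1/4}(n-\tfrac12)^{1/4}$) and $P_n=2\mu(n+\tfrac12)^{1/2}\zeta_n(\Lambda)$. Your version merely spells out the index-matching and the identity $(N+\tfrac12)(N+\tfrac32)=(N+1)^2-\tfrac14$, which the paper leaves implicit.
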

\begin{proof}
Taking in the Lemma~\ref{lem:Cnidentity} the values $Q_n = n^{1/2} (n^2-1/4)^{1/4}$ and $P_n = 2\mu (n+1/2)^{1/2}\zeta_n(\Lambda)$, which correspond to~\eqref{eq:jacobi} multiplied by $(n+1/2)^{1/4}$, the result follows.
\end{proof}

\end{document}